\providecommand{\U}[1]{\protect\rule{.1in}{.1in}}
\newtheorem{theorem}{Theorem}
\newtheorem{remark}[theorem]{Remark}
\newenvironment{proof}[1][Proof]{\noindent\textbf{#1.} }{\ \rule{0.5em}{0.5em}}
\begin{document}

\title{\textbf{Hadamard quantum broadcast channels}}
\author{Qingle Wang\thanks{State Key Laboratory of Networking and Switching
Technology, Beijing University of Posts and Telecommunications, Beijing
100876, China} \thanks{Hearne Institute for Theoretical Physics, Department of
Physics and Astronomy, Louisiana State University, Baton Rouge, Louisiana
70803, USA}
\and Siddhartha Das\footnotemark[2]
\and Mark M. Wilde\footnotemark[2] \thanks{Center for Computation and Technology
Louisiana State University, Baton Rouge, Louisiana 70803, USA}}
\date{\today}
\maketitle

\begin{abstract}
We consider three different communication tasks for quantum broadcast
channels, and we determine the capacity region of a Hadamard broadcast channel
for these various tasks. We define a Hadamard broadcast channel to be such
that the channel from the sender to one of the receivers is
entanglement-breaking and the channel from the sender to the other receiver is
complementary to this one. As such, this channel is a quantum generalization
of a degraded broadcast channel, which is well known in classical information
theory. The first communication task we consider is classical communication to
both receivers, the second is quantum communication to the stronger receiver
and classical communication to other, and the third is entanglement-assisted
classical communication to the stronger receiver and unassisted classical
communication to the other. The structure of a Hadamard broadcast channel
plays a critical role in our analysis:\ the channel to the weaker receiver can
be simulated by performing a measurement channel on the stronger receiver's
system, followed by a preparation channel. As such, we can incorporate the
classical output of the measurement channel as an auxiliary variable and solve
all three of the above capacities for Hadamard broadcast channels, in this way
avoiding known difficulties associated with quantum auxiliary variables.

\end{abstract}





\section{Introduction}

Broadcast channels model the communication of a single sender to multiple
receivers \cite{C72}. They have been explored extensively in classical
information theory \cite{C72,M79,cover1998comments,el2010lecture}, with a
variety of coding schemes known, including the superposition coding method
\cite{C72}. The capacity of a classical broadcast channel has been solved in
certain cases \cite{el2010lecture}\ but remains unsolved in the general case,
being a well known open problem in network classical information theory.

Quantum broadcast channels were introduced in \cite{YHD2006}\ and take on a
particular relevance in quantum information theory, due to the no-cloning
theorem \cite{nat1982,D82}\ and associated \textquotedblleft
no-go\textquotedblright\ results \cite{BCFJS96,BBLW07,PHH08,Piani16}. A
variety of information-theoretic results are now known for quantum broadcast
channels. Refs.~\cite{YHD2006,SW11}\ established a quantum generalization of
the superposition coding method for sending classical information over a
broadcast channel. Ref.~\cite{YHD2006}\ established a method for sending
classical information to one receiver while sending quantum information to the
other. Other information-theoretic results having to do with a variety of
communication tasks for quantum broadcast channels are available in
\cite{DHL10,RSW14,HM15,STW16,TSW16}.

In this paper, we determine the classical capacity region, the
classical--quantum capacity region, and the partially entanglement-assisted
classical capacity region of Hadamard quantum broadcast channels. That is, we
determine the optimal rates at which a sender can transmit classical
information to two receivers, the optimal rates at which a sender can
communicate quantum information to one receiver and classical information to
the other receiver, as well as the optimal rates at which a sender can
communicate classical information to both receivers while sharing entanglement
with one of them, whenever the underlying channel is a Hadamard broadcast
channel. Hadamard broadcast channels are such that the sender Alice's input is
isometrically embedded in the Hilbert space of two receivers, Bob and Charlie,
the channel from Alice to Charlie is entanglement-breaking \cite{HSR03},\ and
the channel from Alice to Bob is complementary to the aforementioned one. The
channel from Alice to Bob is known as a Hadamard channel \cite{KMNR07}, and
for this reason, we call the corresponding broadcast channel a Hadamard
broadcast channel. Single-sender single-receiver Hadamard channels have a
complete characterization in terms of all of their capacities
\cite{BHTW10,WH10b,W15book}, and so our results here represent a further
exploration of these ideas in the domain of broadcast channels. An interesting
example of a Hadamard channel is a quantum-limited amplifier channel
\cite{Holevo2008,giovannetti2014ultimate}, which has appeared in a variety of
contexts in quantum information theory due to its connections with approximate
cloning \cite{SIGA05,B09,Fan2014241}.

The rest of the paper proceeds as follows. In the next section, we provide a
definition of a Hadamard quantum broadcast channel. Section~\ref{sec:CC-task}
details our first main result about the classical capacity of a Hadamard
broadcast channel. Therein we define the communication task, we review the
rate region achievable when using the superposition coding method from
\cite{YHD2006,SW11}, and we detail a proof that the region is single-letter
for Hadamard broadcast channels. Section~\ref{sec:CQ-task} details our second
main result about the classical--quantum capacity of a Hadamard broadcast
channel. Therein we define the communication task, review the achievable rate
region from \cite{YHD2006}, and thereafter give the converse proof for the
classical--quantum capacity region. Section~\ref{sec:EAC-task}\ gives our
third main result about the partially entanglement-assisted classical capacity
of a Hadamard broadcast channel. In Section~\ref{sec:conclusion}, we conclude
with a brief summary and some open questions for future research. We point the
reader to \cite{W15book} for basics of quantum information theory and for
background on the standard notation and concepts being used in our paper.

\section{Hadamard quantum broadcast channels}

We define a quantum broadcast channel $\mathcal{N}_{A\rightarrow
BC}^{\operatorname{H}}$\ to be Hadamard if it has the following action on an
input state $\sigma_{A}$:%
\begin{equation}
\mathcal{N}_{A\rightarrow BC}^{\operatorname{H}}(\sigma_{A})\equiv\sum
_{x,y}\langle\phi^{x}|_{A}\sigma_{A}|\phi^{y}\rangle_{A}|x\rangle\langle
y|_{B}\otimes|\psi^{x}\rangle\langle\psi^{y}|_{C}%
,\label{eq:hadamard-broadcast-def}%
\end{equation}
where the vectors $\{|\phi^{x}\rangle_{A}\}_{x}$ are such that they form a
positive operator-valued measure (POVM) $\sum_{x}|\phi^{x}\rangle\langle
\phi^{x}|_{A}=I_{A}$, $\{|x\rangle_{B}\}_{x}$ is an orthonormal basis, and
$\{|\psi^{x}\rangle_{C}\}_{x}$ is a set of states. Note that the channel in
\eqref{eq:hadamard-broadcast-def} is an isometric channel \cite[Section~4.6.3]%
{W15book}, meaning that its can be reversed. This is a key fact that we
exploit in our paper. The reduced channel to Bob is a Hadamard channel
\cite{KMNR07} of the following form:%
\begin{align}
\mathcal{N}_{A\rightarrow B}^{\operatorname{H}}(\sigma_{A})  &  \equiv
(\operatorname{Tr}_{C}\circ\mathcal{N}_{A\rightarrow BC}^{\operatorname{H}%
})(\sigma_{A})\\
&  =\sum_{x,y}\langle\phi^{x}|_{A}\sigma_{A}|\phi^{y}\rangle_{A}\langle
\psi^{y}|\psi^{x}\rangle_{C}|x\rangle\langle y|_{B}.
\end{align}
The reduced channel to Charlie is an entanglement-breaking channel
\cite{HSR03} of the following form:%
\begin{align}
\mathcal{N}_{A\rightarrow C}^{\operatorname{H}}(\sigma_{A})  &  \equiv
(\operatorname{Tr}_{B}\circ\mathcal{N}_{A\rightarrow BC}^{\operatorname{H}%
})(\sigma_{A})\\
&  =\sum_{x}\langle\phi^{x}|_{A}\sigma_{A}|\phi^{x}\rangle_{A}|\psi^{x}%
\rangle\langle\psi^{x}|_{C},
\end{align}
meaning that the action of the channel is to measure the input with respect to
the POVM $\{|\phi^{x}\rangle\langle\phi^{x}|_{A}\}_{x}$ and then prepare the
state $|\psi^{x}\rangle_{C}$ at the output if the measurement outcome is $x$.
Such channels have the property that Bob can apply the following isometry to
his output received from the channel:%
\begin{equation}
V_{B\rightarrow BYC^{\prime}}\equiv\sum_{x}|x\rangle_{B}\langle x|_{B}%
\otimes|x\rangle_{Y}\otimes|\psi^{x}\rangle_{C^{\prime}},
\end{equation}
where system $C^{\prime}$ is isomorphic to system $C$, and if he then discards
the $B$ and $Y$ systems, the effect is to simulate the channel to Charlie.
That is,%
\begin{equation}
\operatorname{Tr}_{BY}\circ\mathcal{V}_{B\rightarrow BYC^{\prime}}%
\circ\mathcal{N}_{A\rightarrow B}^{\operatorname{H}}=\mathcal{N}_{A\rightarrow
C}^{\operatorname{H}}.
\end{equation}
The channel $\mathcal{D}_{B\rightarrow C^{\prime}}\equiv\operatorname{Tr}%
_{BY}\circ\mathcal{V}_{B\rightarrow BYC^{\prime}}$ is known as the degrading
channel \cite{cmp2005dev}. We can also consider the degrading channel as
arising in two steps:\ a measurement channel $\mathcal{M}_{B\rightarrow
Y}(\cdot)=\sum_{x}|x\rangle\langle x|_{Y}(\cdot)|x\rangle\langle x|_{Y}$
followed by a preparation channel $\mathcal{P}_{Y\rightarrow C^{\prime}}%
(\cdot)=\sum_{x}\langle x|(\cdot)|x\rangle_{Y}|\psi^{x}\rangle\langle\psi
^{x}|_{C^{\prime}}$, so that%
\begin{equation}
\mathcal{D}_{B\rightarrow C^{\prime}}=\mathcal{P}_{Y\rightarrow C^{\prime}%
}\circ\mathcal{M}_{B\rightarrow Y}.
\end{equation}

\section{Classical capacity region of a Hadamard broadcast
channel\label{sec:CC-task}}

\subsection{Definition of the classical capacity region of a broadcast
channel}

We begin by defining the classical capacity region of a quantum broadcast
channel \cite{YHD2006}. Let $\mathcal{N}_{A\rightarrow BC}$ denote a quantum
broadcast channel from a sender Alice to receivers Bob and Charlie. Let
$n\in\mathbb{N}$, $M_{B},M_{C}\in\mathbb{N}$, and $\varepsilon\in\lbrack0,1]$.
An $(n,M_{B},M_{C},\varepsilon)$ code for classical communication over the
broadcast channel $\mathcal{N}_{A\rightarrow BC}$ consists of quantum
codewords $\left\{  \rho_{A^{n}}^{m_{1},m_{2}}\right\}  _{m_{1},m_{2}}$ and
POVMs $\{\Lambda_{B^{n}}^{m_{1}}\}_{m_{1}}$ and $\{\Gamma_{C^{n}}^{m_{2}%
}\}_{m_{2}}$ such that the message pair $(m_{1},m_{2})$ is communicated with
average success probability not smaller than $1-\varepsilon$:%
\begin{equation}
\frac{1}{M_{B}M_{C}}\sum_{m_{1},m_{2}}\operatorname{Tr}\{(\Lambda_{B^{n}%
}^{m_{1}}\otimes\Gamma_{C^{n}}^{m_{2}})\mathcal{N}_{A\rightarrow BC}^{\otimes
n}(\rho_{A^{n}}^{m_{1},m_{2}})\}\geq1-\varepsilon.
\label{eq:success-prob-broadcast}%
\end{equation}
Note that, in the above, $m_{1} \in\{1, \ldots, M_{1}\}$ and $m_{2} \in\{1,
\ldots, M_{2}\}$.

A rate pair $(R_{B},R_{C})$ is achievable for classical communication on
$\mathcal{N}_{A\rightarrow BC}$\ if for all $\varepsilon\in(0,1)$, $\delta>0$,
and sufficiently large $n$, there exists an $(n,2^{n\left[  R_{B}%
-\delta\right]  },2^{n\left[  R_{C}-\delta\right]  },\varepsilon)$ code of the
above form. The classical capacity region of $\mathcal{N}_{A\rightarrow BC}$
is equal to the closure of all achievable rate pairs.

\subsection{Achievable rate region for an arbitrary quantum broadcast channel}

From the superposition coding result in \cite{YHD2006,SW11}, we know the
following achievability statement:

\begin{theorem}
[\cite{YHD2006,SW11}]\label{thm:CC-achieve}Given a quantum broadcast channel
$\mathcal{N}_{A\rightarrow BC}$, a rate pair $(R_{B},R_{C})$ is achievable for
classical communication on $\mathcal{N}_{A\rightarrow BC}$ if%
\begin{align}
R_{B}  &  \leq I(Z;B|W)_{\theta},\label{eq:super-code-1}\\
R_{C}  &  \leq I(W;C)_{\theta},\label{eq:super-code-2}\\
R_{B}+R_{C}  &  \leq I(Z;B)_{\theta}, \label{eq:super-code-3}%
\end{align}
where the information quantities are evaluated with respect to a state
$\theta_{WZBC}$ of the following form:%
\begin{equation}
\sum_{w,z}p_{WZ}(w,z)|w\rangle\langle w|_{W}\otimes|z\rangle\langle
z|_{Z}\otimes\mathcal{N}_{A\rightarrow BC}(\sigma_{A}^{z}),
\end{equation}
with $p_{WZ}$ a probability distribution and $\{\sigma_{A}^{z}\}_{z}$ a set of states.
\end{theorem}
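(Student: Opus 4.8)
The plan is to establish this achievability result by a random \emph{superposition coding} argument, generalizing the classical scheme of Cover as in \cite{YHD2006,SW11}. Fix a distribution $p_{WZ}$ and a set of states $\{\sigma_A^z\}_z$ realizing $\theta_{WZBC}$, and set $M_B=2^{n(R_B-\delta)}$, $M_C=2^{n(R_C-\delta)}$. We build a two-layer codebook: first generate $M_C$ \emph{cloud-centre} sequences $w^n(m_2)$ i.i.d.\ according to $p_W^{\otimes n}$; then, for each $m_2$, generate $M_B$ \emph{satellite} sequences $z^n(m_1,m_2)$ i.i.d.\ according to the conditional $p_{Z|W}^{\otimes n}(\cdot\,|\,w^n(m_2))$. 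The quantum codeword for the pair $(m_1,m_2)$ is the product state $\sigma_{A^n}^{z^n(m_1,m_2)}=\bigotimes_{i=1}^n\sigma_A^{z_i(m_1,m_2)}$, which depends only on the satellite sequence. Both receivers hold the codebook.

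Charlie needs only $m_2$. Averaged over the satellite sequence conditioned on $w^n(m_2)$, his received state is the tensor power $\bigotimes_i\rho_C^{w_i(m_2)}$, where $\rho_C^w\equiv\sum_z p_{Z|W}(z|w)\,\mathcal{N}_{A\rightarrow C}(\sigma_A^z)$ and $\mathcal{N}_{A\rightarrow C}\equiv\operatorname{Tr}_B\circ\mathcal{N}_{A\rightarrow BC}$. Choosing his decoding POVM to depend only on the cloud centres (e.g.\ via the typical projectors of the states $\bigotimes_i\rho_C^{w_i(m_2)}$), his expected error probability is exactly that of the Holevo--Schumacher--Westmoreland (HSW) scheme for the ensemble $\{p_W(w),\rho_C^w\}$, so the standard typical-projector plus Hayashi--Nagaoka analysis gives a POVM $\{\Gamma_{C^n}^{m_2}\}_{m_2}$ with small expected error whenever $R_C<I(W;C)_\theta$.

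Bob decodes the full pair $(m_1,m_2)$ and reports $m_1$. His POVM is built, via Hayashi--Nagaoka, from a nested sandwich of typical projectors: the typical projector of $\theta_B^{\otimes n}$, the cloud-conditional typical projector of $\bigotimes_i\rho_B^{w_i(m_2)}$ with $\rho_B^w\equiv\sum_z p_{Z|W}(z|w)\,\mathcal{N}_{A\rightarrow B}(\sigma_A^z)$, and the codeword-conditional typical projector of $\mathcal{N}_{A\rightarrow B}^{\otimes n}(\sigma_{A^n}^{z^n(m_1,m_2)})$. The error analysis has two events. If the cloud is correct but the satellite wrong, there are fewer than $M_B$ competing codewords, each (conditionally) i.i.d.\ from $p_{Z|W}^{\otimes n}$, and a conditional HSW estimate bounds this contribution by roughly $M_B\,2^{-nI(Z;B|W)_\theta}$, small when $R_B<I(Z;B|W)_\theta$. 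If the cloud is wrong, there are fewer than $M_BM_C$ competing codewords that are marginally i.i.d.\ from $p_Z^{\otimes n}$ and independent of the transmitted state, and an unconditional HSW estimate bounds this contribution by roughly $M_BM_C\,2^{-nI(Z;B)_\theta}$, small when $R_B+R_C<I(Z;B)_\theta$. A union bound over Bob's two events and Charlie's event bounds the expected pair-error probability; derandomizing to a codebook no worse than the average yields, for all large $n$, an $(n,M_B,M_C,\varepsilon)$ code, and taking $\delta\to0$ and the closure over $(R_B,R_C)$ and over all $(p_{WZ},\{\sigma_A^z\}_z)$ gives the claimed achievable region.

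The typical-projector bookkeeping and the Hayashi--Nagaoka reduction are routine; the one genuinely delicate point is Bob's \emph{simultaneous} (layered) decoding---merging the non-commuting cloud-conditional and codeword-conditional typical projectors into a single measurement without extra error---which is handled by the sandwiched-projector construction and repeated use of the gentle-measurement lemma, exactly as in \cite{YHD2006,SW11}. It is worth noting that this is easier here than in the fully general setting, since the channel input $\sigma_A^z$, and hence Bob's received state, depends only on the satellite index, so the two coding layers interact only through the codebook's conditional structure and not through the channel itself.
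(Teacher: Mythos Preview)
The paper does not supply its own proof of this theorem; it is quoted as a known result from \cite{YHD2006,SW11}, so there is no in-paper argument to compare against. Your sketch is a faithful summary of the superposition-coding scheme in those references: random two-layer codebook, HSW decoding at Charlie from the cloud ensemble $\{p_W(w),\rho_C^w\}$ to get $R_C\le I(W;C)_\theta$, and layered/sandwiched typical-projector decoding at Bob yielding the two error terms that produce $R_B\le I(Z;B|W)_\theta$ and $R_B+R_C\le I(Z;B)_\theta$. The one caveat worth flagging is historical rather than mathematical: the \emph{successive} decoder of \cite{YHD2006} (Bob first decodes $m_2$, then $m_1$) naturally yields the extra constraint $R_C\le I(W;B)_\theta$, and it is precisely the \emph{simultaneous} square-root/sandwiched decoder of \cite{SW11} that removes it and gives the region as stated; your final paragraph correctly identifies this as the delicate step and defers to \cite{SW11} for the noncommuting-projector analysis.
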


For a Hadamard quantum broadcast channel, the region above simplifies due to
the structure of the channel. That is, the bound in \eqref{eq:super-code-3} is
unnecessary (redundant) for a Hadamard quantum broadcast channel. To see this,
consider that the sum of \eqref{eq:super-code-1}--\eqref{eq:super-code-2}
leads to%
\begin{align}
R_{B}+R_{C}  &  \leq I(Z;B|W)_{\theta}+I(W;C)_{\theta}\\
&  \leq I(Z;B|W)_{\theta}+I(W;B)_{\theta}\\
&  =I(WZ;B)_{\theta}\\
&  =I(Z;B)_{\theta}+I(W;B|Z)_{\theta}\\
&  =I(Z;B)_{\theta}.
\end{align}
The second inequality follows from the data-processing inequality for mutual
information and from the fact that there is a degrading channel taking system
$B$ to system $C$ for a Hadamard broadcast channel. The next two equalities
follow from the chain rule for mutual information. The last equality follows
because the state of systems $W$ and $B$ are product when conditioned on the
value in system $Z$. So then the achievable region for a Hadamard broadcast
channel consists of \eqref{eq:super-code-1}--\eqref{eq:super-code-2}.

\subsection{Classical capacity of a Hadamard broadcast channel}

We now show that the region specified by
\eqref{eq:super-code-1}--\eqref{eq:super-code-2} is in fact the classical
capacity region of a Hadamard broadcast channel (i.e., one can never achieve a
rate outside of this region). That is, we establish the following capacity
theorem for a Hadamard broadcast channel:

\begin{theorem}
\label{thm:CC-HBC} The classical capacity region of a Hadamard broadcast
channel $\mathcal{N}_{A\rightarrow BC}^{\operatorname{H}}$ is the set of rate
pairs $(R_{B},R_{C})$ such that%
\begin{align}
R_{B}  &  \leq I(Z;B|W)_{\theta},\label{eq:HB-ineq-1}\\
R_{C}  &  \leq I(W;C)_{\theta}, \label{eq:HB-ineq-2}%
\end{align}
for some state%
\begin{equation}
\theta_{WZA}=\sum_{w,z}p_{WZ}(w,z)|w\rangle\langle w|_{W}\otimes
|z\rangle\langle z|_{Z}\otimes\varphi_{A}^{z}, \label{eq:code-state}%
\end{equation}
where $p_{WZ}$ is a probability distribution, each $\varphi_{A}^{z}$ is a pure
state, and the information quantities are evaluated with respect to the state
$\theta_{WZBC}=\mathcal{N}_{A\rightarrow BC}^{\operatorname{H}}(\theta_{WZA})$.
\end{theorem}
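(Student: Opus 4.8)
\noindent\emph{Proof strategy.}\ Since achievability follows from Theorem~\ref{thm:CC-achieve} and the reduction in the preceding subsection, and since allowing the states $\varphi_A^z$ in \eqref{eq:code-state} to be mixed cannot enlarge the region (decompose a mixed $\varphi_A^z$ into pure states and append the decomposition label to $Z$: this only increases $I(Z;B|W)$ and leaves $I(W;C)$ unchanged), all the work is in the converse, which I would prove allowing mixed $\varphi_A^z$. Fix $\varepsilon,\delta>0$, take $n$ large enough that an $(n,2^{n[R_B-\delta]},2^{n[R_C-\delta]},\varepsilon)$ code exists, let $M_1,M_2$ be the uniform independent messages (for Bob and Charlie respectively), and let $\sigma_{M_1M_2B^nC^n}$ be the corresponding state of the messages and the two $n$-fold channel outputs. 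Fano's inequality and the Holevo bound (data processing for the decoding POVMs) give
\[
n(R_B-\delta)\le I(M_1;B^n)_\sigma+nf(\varepsilon),\qquad n(R_C-\delta)\le I(M_2;C^n)_\sigma+nf(\varepsilon),
\]
with $f(\varepsilon)\rightarrow0$ as $\varepsilon\rightarrow0$; the goal is to single-letterize both with one common pair of \emph{classical} auxiliary variables.

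The plan is to use the Hadamard structure as in the paper's abstract. At each of the $n$ positions I would append the register $Y_i$ created by the isometry $|x\rangle_B\mapsto|x\rangle_B|x\rangle_{Y_i}$: this changes no mutual information involving $B^n$ and the messages, it coincides with the measurement channel $\mathcal{M}_{B\rightarrow Y}$ once $B_i$ is traced out (so $Y^{i-1}$ is classical in every marginal that omits $B^{i-1}$), and feeding $Y_i$ into $\mathcal{P}_{Y\rightarrow C^{\prime}}$ produces systems $\tilde{C}_i$ with $\sigma_{M_2\tilde{C}^{n}}=\sigma_{M_2C^n}$, by the degrading identity $\mathcal{N}_{A\rightarrow C}^{\operatorname{H}}=\mathcal{D}_{B\rightarrow C^{\prime}}\circ\mathcal{N}_{A\rightarrow B}^{\operatorname{H}}$. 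I then set $W_i:=(M_2,Y^{i-1})$ and $Z_i:=(M_1,M_2,Y^{i-1})$.

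Charlie's bound then mimics the classical degraded-broadcast converse: by the chain rule and data processing ($\tilde{C}^{i-1}=\mathcal{P}^{\otimes(i-1)}(Y^{i-1})$),
\[
I(M_2;C^n)_\sigma=I(M_2;\tilde{C}^{n})=\sum_iI(M_2;\tilde{C}_i\,|\,\tilde{C}^{i-1})\le\sum_iI(M_2\tilde{C}^{i-1};\tilde{C}_i)\le\sum_iI(M_2Y^{i-1};\tilde{C}_i)=\sum_iI(W_i;C_i),
\]
where, conditioned on any value of $W_i$, the $i$th block input is the (mixed) post-measurement reduced state of $A_i$ — exactly the structure of \eqref{eq:code-state}. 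For Bob's bound, independence of the messages and the chain rule give $n(R_B-\delta)\le I(M_1;B^n|M_2)_\sigma+nf(\varepsilon)=\sum_iI(M_1;B_i|M_2B^{i-1})_\sigma+nf(\varepsilon)$, and the task reduces to showing $\sum_iI(M_1;B_i|M_2B^{i-1})\le\sum_iI(M_1;B_i|M_2Y^{i-1})=\sum_iI(Z_i;B_i|W_i)$. I would do this by following the template of the converse for the classical degraded broadcast channel: adjoin $Y^{i-1}$ to the conditioning at no cost (it is a function of $B^{i-1}$), rearrange the sum by a Csisz\'ar-sum-type identity so as to trade Bob's past outputs against Charlie's future outputs, and then invoke memorylessness of the channel at block $i$.

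The one step with no classical counterpart — and the place I expect the real difficulty — is the replacement of the quantum conditioning register $B^{i-1}$ by its classical measurement record $Y^{i-1}$: for a general quantum broadcast channel with an entangled-across-blocks codeword this would be false, and what rescues it here is precisely that $\mathcal{D}_{B\rightarrow C^{\prime}}$ factors through the \emph{fixed measurement} $\mathcal{M}_{B\rightarrow Y}$ — the same structural feature responsible for the additivity of the Holevo information of a single-sender, single-receiver Hadamard channel. Once both chains are in place, I would introduce a uniform time-sharing variable $T$ on $\{1,\dots,n\}$, put $W:=(T,W_T)$ and $Z:=(T,Z_T)$ (still classical, with $Z$ refining $W$), read off a state $\theta_{WZA}$ of the form \eqref{eq:code-state} with $\theta_{WZBC}=\mathcal{N}_{A\rightarrow BC}^{\operatorname{H}}(\theta_{WZA})$ satisfying $R_B-\delta\le I(Z;B|W)_\theta+f(\varepsilon)$ and $R_C-\delta\le I(W;C)_\theta+f(\varepsilon)$, and finally let $\varepsilon,\delta\rightarrow0$, so that $(R_B,R_C)$ lies in the region (which is closed after the usual cardinality bound on $W$ and $Z$).
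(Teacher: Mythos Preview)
Your treatment of achievability and of Charlie's bound is fine and matches the paper. The gap is in Bob's bound, precisely at the step you flag as ``the real difficulty'': passing from $\sum_i I(M_1;B_i\,|\,M_2 B^{i-1})$ to $\sum_i I(M_1;B_i\,|\,M_2 Y^{i-1})$. Adjoining $Y^{i-1}$ ``at no cost'' only gives $I(M_1;B_i\,|\,M_2 B^{i-1})=I(M_1;B_i\,|\,M_2 B^{i-1}Y^{i-1})$ (local isometry), and then dropping the quantum register $B^{i-1}$ from the conditioning is not a one-directional inequality. Nor does a Csisz\'ar-sum identity rescue you: the classical degraded-broadcast converse ultimately uses that $X_i$ is a \emph{deterministic function} of $(M_1,M_2)$ and that the channel is memoryless, so that $(M_1,U_i)\to X_i\to Y_i$ is Markov. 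In the quantum case the codeword $\rho_{A^n}^{m_1,m_2}$ may be entangled across blocks, there is no classical $X_i$ to play that role, and no quantum Csisz\'ar-sum identity with the needed monotonicity is available. As stated, your $Z_i=(M_1,M_2,Y^{i-1})$ does not capture the $i$th-block channel input, so the single-letter state you extract need not satisfy \eqref{eq:HB-ineq-1}.

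The paper circumvents this entirely by a purity argument that you are missing. First take a spectral decomposition $\rho_{A^n}^{m_1,m_2}=\sum_t p(t|m_1,m_2)\,\varsigma_{A^n}^{m_1,m_2,t}$ with pure $\varsigma$'s and carry the label $T$. Since $\mathcal{N}^{\operatorname{H}}_{A\to BC}$ is an isometry, the state on $B^nC^n$ is \emph{pure} conditioned on $(M_1,M_2,T)$, so $H(B^n|M_1M_2T)=H(C^n|M_1M_2T)$ and hence
\[
I(M_1T;B^n|M_2)=H(B^n|M_2)-H(C^n|M_1M_2T).
\]
Now the chain rule gives $\sum_i\bigl[H(B_i|B^{i-1}M_2)-H(C_i|C^{i-1}M_1M_2T)\bigr]$, and \emph{both} terms single-letterize in the right direction via the Hadamard structure: $H(B_i|B^{i-1}M_2)\le H(B_i|Y^{i-1}M_2)$ because $Y^{i-1}$ is a processing of $B^{i-1}$, and $H(C_i|C^{i-1}M_1M_2T)\ge H(C_i|Y^{i-1}M_1M_2T)$ because $C^{i-1}$ is a processing of $Y^{i-1}$. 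A second spectral decomposition (label $S$) of the post-measurement $A_i$-state makes $B_iC_i$ pure conditioned on $(M_1,M_2,T,Y^{i-1},S)$, converting back to a $B_i$-entropy and yielding $I(M_1TS;B_i|Y^{i-1}M_2)$. The identifications $W=(Q,M_2,\overline{Y})$ and $Z=(M_1,T,S,\ldots)$ then give \eqref{eq:HB-ineq-1}--\eqref{eq:HB-ineq-2} with pure $\varphi_A^z$. The purity swap $B^n\leftrightarrow C^n$ is the key idea your proposal lacks.
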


\begin{proof}
The achievability part follows from a direct application of
Theorem~\ref{thm:CC-achieve}.

For the converse, consider an arbitrary $(n,M_{B},M_{C},\varepsilon)$ code for
the broadcast Hadamard channel $\mathcal{N}_{A\rightarrow BC}%
^{\operatorname{H}}$. Let $\omega_{M_{1}M_{2}B^{n}C^{n}}$ denote the following
state:%
\begin{equation}
\omega_{M_{1}M_{2}B^{n}C^{n}}\equiv\frac{1}{M_{B}M_{C}}\sum_{m_{1},m_{2}%
}|m_{1}\rangle\langle m_{1}|_{M_{1}}\otimes|m_{2}\rangle\langle m_{2}|_{M_{2}}
\otimes\mathcal{N}_{A\rightarrow BC}^{\operatorname{H}\otimes n}(\rho_{A^{n}%
}^{m_{1},m_{2}}),
\end{equation}
so that this is the state before the receivers act with their measurements.
The post-measurement state is as follows:%
\begin{multline}
\omega_{M_{1}M_{2}M_{1}^{\prime}M_{2}^{\prime}}\equiv\sum_{m_{1},m_{1}%
^{\prime},m_{2},m_{2}^{\prime}}p(m_{1},m_{1}^{\prime},m_{2},m_{2}^{\prime
})|m_{1}\rangle\langle m_{1}|_{M_{1}}\otimes|m_{1}^{\prime}\rangle\langle
m_{1}^{\prime}|_{M_{1}^{\prime}}\\
\otimes|m_{2}\rangle\langle m_{2}|_{M_{2}}\otimes|m_{2}^{\prime}\rangle\langle
m_{2}^{\prime}|_{M_{2}^{\prime}},
\end{multline}
where%
\begin{equation}
p(m_{1},m_{1}^{\prime},m_{2},m_{2}^{\prime})\equiv\frac{\operatorname{Tr}%
\{(\Lambda_{B^{n}}^{m_{1}^{\prime}}\otimes\Gamma_{C^{n}}^{m_{2}^{\prime}%
})\mathcal{N}_{A\rightarrow BC}^{\operatorname{H}\otimes n}(\rho_{A^{n}%
}^{m_{1},m_{2}})\}}{M_{B}M_{C}}.
\end{equation}
From the condition in \eqref{eq:success-prob-broadcast}, it follows that%
\begin{equation}
\frac{1}{2}\left\Vert \omega_{M_{1}M_{2}M_{1}^{\prime}M_{2}^{\prime}%
}-\overline{\Phi}_{M_{1}M_{1}^{\prime}}\otimes\overline{\Phi}_{M_{2}%
M_{2}^{\prime}}\right\Vert _{1}\leq\varepsilon,
\end{equation}
where $\overline{\Phi}_{M_{i}M_{i}^{\prime}}\equiv\frac{1}{|M_{i}|}\sum
_{m_{i}}|m_{i}\rangle\langle m_{i}|_{M_{i}}\otimes|m_{i}\rangle\langle
m_{i}|_{M_{i}^{\prime}}$ is the maximally correlated state for $i\in\{1,2\}$.

The reduced state for Charlie can be simulated by acting with the measurement
channel $\mathcal{M}_{B\rightarrow Y}^{\otimes n}$ followed by the preparation
channel $\mathcal{P}_{Y\rightarrow C}^{\otimes n}$, given our assumption of a
Hadamard broadcast channel. That is, we have that%
\begin{equation}
\omega_{M_{1}M_{2}C^{n}}=\mathcal{P}_{Y\rightarrow C}^{\otimes n}(\xi
_{M_{1}M_{2}Y^{n}}),
\end{equation}
where%
\begin{equation}
\xi_{M_{1}M_{2}Y^{n}}\equiv\mathcal{M}_{B\rightarrow Y}^{\otimes n}%
(\omega_{M_{1}M_{2}B^{n}}).
\end{equation}

Let a spectral decomposition for the state $\rho_{A^{n}}^{m_{1},m_{2}}$ be as
follows:%
\begin{equation}
\rho_{A^{n}}^{m_{1},m_{2}}=\sum_{t}p(t|m_{1},m_{2})\varsigma_{A^{n}}%
^{m_{1},m_{2},t},
\end{equation}
with each $\varsigma_{A^{n}}^{m_{1},m_{2},t}$ pure, so that the following
state $\omega_{M_{1}M_{2}B^{n}C^{n}T}$ is an extension of $\omega_{M_{1}%
M_{2}B^{n}C^{n}}$:%
\begin{multline}
\omega_{M_{1}M_{2}B^{n}C^{n}T}\equiv\frac{1}{M_{B}M_{C}}\sum_{m_{1},m_{2}%
,t}|m_{1}\rangle\langle m_{1}|_{M_{1}}\otimes|m_{2}\rangle\langle
m_{2}|_{M_{2}}\\
\otimes\mathcal{N}_{A\rightarrow BC}^{\operatorname{H}\otimes n}%
(\varsigma_{A^{n}}^{m_{1},m_{2},t})\otimes p(t|m_{1},m_{2})|t\rangle\langle
t|_{T}%
\end{multline}
For $i\in\{1,\ldots,n\}$, let%
\begin{align}
\zeta_{M_{1}M_{2}B_{i}C_{i}Y^{i-1}T}^{i} &  \equiv(\operatorname{id}%
_{M_{1}M_{2}B_{i}C_{i}T}\otimes\mathcal{M}_{B\rightarrow Y}^{\otimes\left(
i-1\right)  }\otimes\operatorname{Tr}_{C^{i-1}})(\omega_{M_{1}M_{2}B^{i}%
C^{i}T})\\
&  =\frac{1}{M_{B}M_{C}}\sum_{m_{1},m_{2},t}|m_{1}\rangle\langle m_{1}%
|_{M_{1}}\otimes|m_{2}\rangle\langle m_{2}|_{M_{2}}\otimes\nonumber\\
&  \qquad\qquad(\mathcal{N}_{A_{i}\rightarrow B_{i}C_{i}}^{\operatorname{H}%
}\otimes\left[  \left(  \mathcal{M}_{B\rightarrow Y}^{\otimes\left(
i-1\right)  }\otimes\operatorname{Tr}_{C^{i-1}}\right)  \circ\mathcal{N}%
_{A\rightarrow BC}^{\operatorname{H}\otimes(i-1)}\right]  )(\varsigma_{A^{i}%
}^{m_{1},m_{2},t})\nonumber\\
&  \qquad\qquad\otimes p(t|m_{1},m_{2})|t\rangle\langle t|_{T}\\
&  =\frac{1}{M_{B}M_{C}}\sum_{m_{1},m_{2},y^{i-1},t}|m_{1}\rangle\langle
m_{1}|_{M_{1}}\otimes|m_{2}\rangle\langle m_{2}|_{M_{2}}\otimes\mathcal{N}%
_{A_{i}\rightarrow B_{i}C_{i}}^{\operatorname{H}}(\tau_{A_{i}}^{m_{1}%
,m_{2},t,y^{i-1}})\nonumber\\
&  \qquad\qquad\otimes p(y^{i-1}|m_{1},m_{2},t)|y^{i-1}\rangle\langle
y^{i-1}|_{Y^{i-1}}\otimes p(t|m_{1},m_{2})|t\rangle\langle t|_{T},
\end{align}
where%
\begin{equation}
\left[  \left(  \mathcal{M}_{B\rightarrow Y}^{\otimes\left(  i-1\right)
}\otimes\operatorname{Tr}_{C^{i-1}}\right)  \circ\mathcal{N}_{A\rightarrow
BC}^{\operatorname{H}\otimes(i-1)}\right]  (\varsigma_{A^{i}}^{m_{1},m_{2}%
,t})=\sum_{y^{i-1}}\tau_{A_{i}}^{m_{1},m_{2},t,y^{i-1}}\otimes p(y^{i-1}%
|m_{1},m_{2},t)|y^{i-1}\rangle\langle y^{i-1}|_{Y^{i-1}}%
\end{equation}
Taking a spectral decomposition of $\tau_{A_{i}}^{m_{1},m_{2},t,y^{i-1}}$ as
\begin{equation}
\tau_{A_{i}}^{m_{1},m_{2},t,y^{i-1}}=\sum_{s}p(s|m_{1},m_{2},t,y^{i-1}%
)\varphi_{A_{i}}^{m_{1},m_{2},t,y^{i-1},s},
\end{equation}
with each $\varphi_{A_{i}}^{m_{1},m_{2},t,y^{i-1},s}$ pure, we find that an
extension of $\zeta_{M_{1}M_{2}B_{i}C_{i}Y^{i-1}T}^{i}$ is%
\begin{multline}
\zeta_{M_{1}M_{2}B_{i}C_{i}Y^{i-1}TS}^{i}\equiv\frac{1}{M_{B}M_{C}}\sum
_{m_{1},m_{2},t,y^{i-1},s}|m_{1}\rangle\langle m_{1}|_{M_{1}}\otimes
|m_{2}\rangle\langle m_{2}|_{M_{2}}\otimes\mathcal{N}_{A_{i}\rightarrow
B_{i}C_{i}}^{\operatorname{H}}(\varphi_{A_{i}}^{m_{1},m_{2},t,y^{i-1},s})\\
\otimes p(y^{i-1}|m_{1},m_{2},t)|y^{i-1}\rangle\langle y^{i-1}|_{Y^{i-1}%
}\otimes p(t|m_{1},m_{2})|t\rangle\langle t|_{T}\otimes p(s|m_{1}%
,m_{2},t,y^{i-1})|s\rangle\langle s|_{S}.
\end{multline}
Let $\zeta_{QM_{1}M_{2}BC\overline{Y}TS}$ denote the following state:%
\begin{equation}
\zeta_{QM_{1}M_{2}BC\overline{Y}TS}\equiv\frac{1}{n}\sum_{i=1}^{n}%
|i\rangle\langle i|_{Q}\otimes\zeta_{M_{1}M_{2}B_{i}C_{i}Y^{i-1}TS}%
^{i},\label{eq:collected-state}%
\end{equation}
where $\overline{Y}$ is large enough to hold the values in each $Y^{i-1}$ (and
zero-padded if need be). Let $\zeta_{Q\overline{Q}M_{1}M_{2}\overline{M_{2}%
}BC\overline{Y}\overline{\overline{Y}}TS}$ denote an extension of
$\zeta_{QM_{1}M_{2}BC\overline{Y}TS}$, such that systems $\overline{Q}$,
$\overline{M_{2}}$, and $\overline{\overline{Y}}$ contain a classical copy of
the value in $Q$, $M_{2}$, and $\overline{Y}$, respectively.

We begin our analysis using information inequalities. Consider that%
\begin{align}
\log M_{C}  &  =I(M_{2};M_{2}^{\prime})_{\overline{\Phi}}%
\label{eq:first-C-bound}\\
&  \leq I(M_{2};M_{2}^{\prime})_{\omega}+\varepsilon\log M_{C}+h_{2}%
(\varepsilon),
\end{align}
where the inequality follows from a uniform bound for continuity of entropy
\cite{Z07,A07} (see also \cite{W15book}) and $h_{2}(\varepsilon)$ denotes the
binary entropy. Continuing, we find that%
\begin{align}
I(M_{2};M_{2}^{\prime})_{\omega}  &  \leq I(M_{2};C^{n})_{\omega}\\
&  =H(C^{n})_{\omega}-H(C^{n}|M_{2})_{\omega}\\
&  =\sum_{i=1}^{n}H(C_{i}|C^{i-1})_{\omega}-H(C_{i}|C^{i-1}M_{2})_{\omega}\\
&  \leq\sum_{i=1}^{n}H(C_{i})_{\omega}-H(C_{i}|C^{i-1}M_{2})_{\omega}
\label{eq:pf-1st-block}%
\end{align}
The first inequality follows from quantum data processing. The first equality
is an expansion of the mutual information, and the second equality is an
application of the chain rule for conditional entropy. The last inequality
follows from the fact that conditioning does not increase entropy. Continuing,%
\begin{align}
\eqref{eq:pf-1st-block}  &  \leq\sum_{i=1}^{n}H(C_{i})_{\zeta^{i}}%
-H(C_{i}|Y^{i-1}M_{2})_{\zeta^{i}}\\
&  =\sum_{i=1}^{n}I(Y^{i-1}M_{2};C_{i})_{\zeta^{i}}\\
&  =nI(\overline{Y}M_{2};C|Q)_{\zeta}\\
&  \leq nI(\overline{Y}M_{2}Q;C)_{\zeta} . \label{eq:last-C-bound}%
\end{align}
The first inequality follows due to the structure of the Hadamard broadcast
channel:\ the systems $C^{i-1}$ can be simulated from classical systems
$Y^{i-1}$, which in turn can be simulated from the systems $B^{i-1}$. The
first equality follows from the definition of mutual information. The second
equality follows by using the definition of the state in
\eqref{eq:collected-state} and the fact that conditioning on a classical
system leads to a convex combination of mutual informations. The last
inequality follows because $I(\overline{Y}M_{2};C|Q)_{\zeta}=I(\overline
{Y}M_{2}Q;C)_{\zeta}-I(Q;C)_{\zeta}\leq I(\overline{Y}M_{2}Q;C)_{\zeta}$.

We now handle the other rate bound. Consider that%
\begin{align}
\log M_{B} &  =I(M_{1};M_{1}^{\prime})_{\overline{\Phi}}\\
&  \leq I(M_{1};M_{1}^{\prime})_{\omega}+\varepsilon\log M_{B}+h_{2}%
(\varepsilon),
\end{align}
where the inequality follows from a uniform bound for continuity of entropy
\cite{Z07,A07} (see also \cite{W15book}). Continuing, we find that%
\begin{align}
I(M_{1};M_{1}^{\prime})_{\omega} &  \leq I(M_{1};B^{n}M_{2})_{\omega}\\
&  =I(M_{1};B^{n}|M_{2})_{\omega}\\
&  \leq I(M_{1}T;B^{n}|M_{2})_{\omega}\\
&  =H(B^{n}|M_{2})_{\omega}-H(B^{n}|M_{2}M_{1}T)_{\omega}\\
&  =H(B^{n}|M_{2})_{\omega}-H(C^{n}|M_{2}M_{1}T)_{\omega}\\
&  =\sum_{i=1}^{n}H(B_{i}|B^{i-1}M_{2})_{\omega}-H(C_{i}|C^{i-1}M_{2}%
M_{1}T)_{\omega}\label{eq:pf-2nd-block}%
\end{align}
The first inequality follows from quantum data processing. The first equality
follows from the chain rule for mutual information and the fact that
$I(M_{1};M_{2})_{\omega}=0$. The second inequality follows from quantum data
processing for the conditional mutual information. The second equality is an
expansion of the conditional mutual information. The third equality follows
because the state of systems $B^{n}C^{n}$ is pure when conditioned on
classical systems $M_{1}$, $M_{2}$, and $T$. The last equality applies the
chain rule for conditional entropy. Continuing,%
\begin{align}
\eqref{eq:pf-2nd-block} &  \leq\sum_{i=1}^{n}H(B_{i}|Y^{i-1}M_{2})_{\zeta^{i}%
}-H(C_{i}|Y^{i-1}M_{2}M_{1}T)_{\zeta^{i}}\\
&  \leq\sum_{i=1}^{n}H(B_{i}|Y^{i-1}M_{2})_{\zeta^{i}}-H(C_{i}|Y^{i-1}%
M_{2}M_{1}TS)_{\zeta^{i}}\\
&  =\sum_{i=1}^{n}H(B_{i}|Y^{i-1}M_{2})_{\zeta^{i}}-H(B_{i}|Y^{i-1}M_{2}%
M_{1}TS)_{\zeta^{i}}\\
&  =\sum_{i=1}^{n}I(M_{1}TS;B_{i}|Y^{i-1}M_{2})_{\zeta^{i}}\\
&  =nI(M_{1}TS;B|\overline{Y}M_{2}Q)_{\zeta}\\
&  \leq nI(M_{1}TS\overline{Q}\overline{M_{2}}\overline{\overline{Y}%
};B|\overline{Y}M_{2}Q)_{\zeta}.
\end{align}
The first inequality applies the data processing inequality for conditional
entropy:\ the $Y^{i-1}$ systems result from measurements of the $B^{i-1}$
systems, and the $C^{i-1}$ systems can be simulated by preparation channels
acting on the $Y^{i-1}$ systems. The second inequality again follows from the
data processing inequality for conditional entropy. The first equality follows
because the state of the $B_{i}C_{i}$ systems is pure when conditioned on
systems $Y^{i-1}$, $M_{2}$, $M_{1}$, $T$, and $S$. The second equality follows
from the definition of conditional mutual information. The third equality
follows by introducing the $Q$ system and evaluating the conditional mutual
information of the state $\zeta_{QM_{1}M_{2}BC\overline{Y}TS}$. The final
inequality follows from data processing for the conditional mutual information.

Putting everything together, we find that the following inequalities hold%
\begin{align}
\frac{1-\varepsilon}{n}\log M_{B}  &  \leq I(M_{1}TS\overline{Q}%
\overline{M_{2}}\overline{\overline{Y}};B|\overline{Y}M_{2}Q)_{\zeta}+\frac
{1}{n}h_{2}(\varepsilon),\\
\frac{1-\varepsilon}{n}\log M_{C}  &  \leq I(\overline{Y}M_{2}Q;C)_{\zeta
}+\frac{1}{n}h_{2}(\varepsilon).
\end{align}
Now identifying the systems $M_{1}TS\overline{Q}\overline{M_{2}}%
\overline{\overline{Y}}$ with system $Z$ in \eqref{eq:code-state}, systems
$\overline{Y}M_{2}Q$ with system $W$ in \eqref{eq:code-state}, and the state
$\varphi_{A_{i}}^{m_{1},m_{2},t,y^{i-1},s}$ with $\varphi_{A}^{z}$ in
\eqref{eq:code-state}, we can rewrite the above inequalities as follows:%
\begin{align}
\frac{1-\varepsilon}{n}\log M_{B}  &  \leq I(Z;B|W)_{\zeta}+\frac{1}{n}%
h_{2}(\varepsilon),\\
\frac{1-\varepsilon}{n}\log M_{C}  &  \leq I(W;C)_{\zeta}+\frac{1}{n}%
h_{2}(\varepsilon).
\end{align}
Now that we have established that these inequalities hold for an arbitrary
$(n,M_{B},M_{C},\varepsilon)$ code, considering a sequence $\{(n,M_{B}%
,M_{C},\varepsilon_{n})\}_{n}$ of them with $\varepsilon_{n}\rightarrow0$ as
$n\rightarrow\infty$, we find that the rate region is characterized by \eqref{eq:HB-ineq-1}--\eqref{eq:HB-ineq-2}.
\end{proof}

\section{Classical--quantum capacity of a Hadamard broadcast
channel\label{sec:CQ-task}}

\subsection{Definition of the classical--quantum capacity region of a quantum
broadcast channel}

We now recall the definition of the classical--quantum capacity region of a
quantum broadcast channel \cite{YHD2006}. Let $\mathcal{N}_{A\rightarrow BC}$
denote a quantum broadcast channel from a sender Alice to receivers Bob and
Charlie. Let $n\in\mathbb{N}$, $M_{B},M_{C}\in\mathbb{N}$, and $\varepsilon
\in\lbrack0,1]$. An $(n,M_{B},M_{C},\varepsilon)$ code for classical--quantum
communication over the broadcast channel $\mathcal{N}_{A\rightarrow BC}$
consists of quantum codewords $\left\{  \rho_{RA^{n}}^{m}\right\}  _{m}$, such
that $\dim(\mathcal{H}_{R})=M_{B}$, a decoding channel $\mathcal{D}%
_{B^{n}\rightarrow\widehat{R}}$, and a decoding POVM $\{\Gamma_{C^{n}}%
^{m}\}_{m}$. Let the state after the channel acts be as follows:%
\begin{equation}
\omega_{MRB^{n}C^{n}}\equiv\frac{1}{M_{B}}\sum_{m}|m\rangle\langle
m|_{M}\otimes\mathcal{N}_{A\rightarrow BC}^{\otimes n}(\rho_{RA^{n}}%
^{m}),\label{eq:cq-code-state}%
\end{equation}
and let the state after the decoders act be as follows:%
\begin{equation}
\omega_{MM^{\prime}R\widehat{R}}\equiv\sum_{m^{\prime}}|m^{\prime}%
\rangle\langle m^{\prime}|_{M^{\prime}}\otimes\operatorname{Tr}_{C^{n}%
}\{\Gamma_{C^{n}}^{m^{\prime}}\left[  \mathcal{D}_{B^{n}\rightarrow\widehat
{R}}(\omega_{MRB^{n}C^{n}})\right]  \}.
\end{equation}
For an $(n,M_{B},M_{C},\varepsilon)$ code, the following condition holds%
\begin{equation}
\frac{1}{2}\left\Vert \overline{\Phi}_{MM^{\prime}}\otimes\Phi_{R\widehat{R}%
}-\omega_{MM^{\prime}R\widehat{R}}\right\Vert _{1}\leq\varepsilon,
\end{equation}
where $\Phi_{R\widehat{R}}$ denotes a maximally entangled state.

A rate pair $(Q_{B},R_{C})$ is achievable for classical--quantum communication
on $\mathcal{N}_{A\rightarrow BC}$\ if for all $\varepsilon\in(0,1)$,
$\delta>0$, and sufficiently large $n$, there exists an $(n,2^{n\left[
Q_{B}-\delta\right]  },2^{n\left[  R_{C}-\delta\right]  },\varepsilon)$ code
of the above form. The classical--quantum capacity region of $\mathcal{N}%
_{A\rightarrow BC}$ is equal to the closure of all achievable rate pairs.

\subsection{Achievable rate region for an arbitrary quantum broadcast channel}

From \cite{YHD2006}, we know the following achievability statement:

\begin{theorem}
[\cite{YHD2006}]\label{thm:CQ-achieve}Given a quantum broadcast channel
$\mathcal{N}_{A\rightarrow BC}$, a rate pair $(Q_{B},R_{C})$ is achievable for
classical--quantum communication on $\mathcal{N}_{A\rightarrow BC}$ if%
\begin{align}
Q_{B}  &  \leq I(R\rangle BW)_{\theta},\\
R_{C}  &  \leq\min\{I(W;B)_{\theta},I(W;C)_{\theta}\},
\end{align}
where the information quantities are evaluated with respect to a state
$\theta_{WRBC}$ of the following form:%
\begin{equation}
\theta_{WRBC}\equiv\sum_{w}p_{W}(w)|w\rangle\langle w|_{W}\otimes
\mathcal{N}_{A\rightarrow BC}(\varphi_{RA}^{w}),
\end{equation}
with $p_{W}$ a probability distribution and $\{\varphi_{RA}^{w}\}_{w}$ a set
of pure states.
\end{theorem}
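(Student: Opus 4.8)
The plan is to establish this rate region by a quantum superposition coding scheme that layers a classical (Holevo--Schumacher--Westmoreland, HSW) outer code for the auxiliary variable $W$ beneath a Lloyd--Shor--Devetak (LSD) quantum inner code for the reference system $R$. Fix a distribution $p_W$ and pure states $\{\varphi_{RA}^w\}_w$ achieving the target point. First I would generate the outer codebook: sample $2^{nR_C}$ sequences $w^n(\ell)$, for $\ell \in \{1,\dots,2^{nR_C}\}$, i.i.d.\ from $p_W^{\otimes n}$ (using constant-composition sequences if it simplifies the typicality bookkeeping), and treat $\ell$ as a common classical message destined for both Bob and Charlie. Alice's actual channel input for message $\ell$ is then the LSD inner encoding applied to systems correlated with $\varphi_{RA}^{w^n(\ell)}$.

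Next comes the decoding of $\ell$. Charlie applies a square-root (pretty-good) measurement built from the conditionally typical projectors of $\mathcal{N}_{A\to C}^{\operatorname{H}\otimes n}(\varphi_{A}^{w^n(\ell)})$; by the HSW packing argument its error probability, averaged over the codebook, is small provided $R_C < I(W;C)_\theta - \delta$. Bob performs the analogous square-root measurement on his $B^n$ systems, which succeeds provided $R_C < I(W;B)_\theta - \delta$, and the two constraints together produce the stated $\min$. Conditioned on a fixed sequence $w^n$, the residual task is pure quantum communication over the channel $\rho \mapsto \mathcal{N}_{A\to B}^{\operatorname{H}\otimes n}(\rho)$ with inputs supported on the $w^n$-conditionally typical subspace; for this I would invoke the LSD theorem (via entanglement generation and a decoupling argument) to get a quantum code at any rate below the conditional coherent information. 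Averaging $\sum_{w^n} p_W^{\otimes n}(w^n)\, I(R^n\rangle B^n)_{\mathcal{N}^{\otimes n}(\varphi^{w^n})}$ and using additivity of coherent information on the tensor-power states $\varphi^{w^n}=\bigotimes_i \varphi^{w_i}$, this equals $n\,I(R\rangle BW)_\theta$, so a typical outer codeword $w^n$ admits an inner code achieving $Q_B \le I(R\rangle BW)_\theta - \delta$; a standard expurgation then fixes one good $w^n$ per message $\ell$.

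The step I expect to be the main obstacle is gluing the classical and quantum layers at Bob: his measurement that extracts $\ell$ must not destroy the quantum information he is supposed to recover. I would handle this with a gentle-measurement argument — since Bob's decoding of $\ell$ succeeds with probability $1-O(\varepsilon)$, the associated measurement can be implemented coherently (or as a nearly non-disturbing instrument), so that after recording $\ell$ Bob holds a state $O(\sqrt{\varepsilon})$-close in trace distance to the one he would hold if $w^n(\ell)$ were known to him for free; the inner LSD decoder $\mathcal{D}_{B^n\to\widehat{R}}$, tailored to that $w^n$, then still works. A secondary technical point is that the inner quantum code must perform well not for a single fixed $w^n$ but on average over the random outer codebook, which is resolved by the usual conditional-typicality packing/covering estimates followed by derandomization.

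Finally I would assemble the error analysis by a union bound over (i) Charlie's failure to decode $\ell$, (ii) Bob's failure to decode $\ell$, (iii) the gentle-measurement disturbance at Bob, and (iv) the inner quantum decoder's infidelity, each $O(\varepsilon)$ (or $O(\sqrt{\varepsilon})$) for $n$ large; derandomizing to a fixed good codebook and expurgating the worst fraction of messages (costing only a vanishing rate and a factor of two) yields a genuine $(n,2^{n[Q_B-\delta]},2^{n[R_C-\delta]},\varepsilon)$ code. Letting $\delta\to 0$ and $n\to\infty$, and then taking the closure (which also absorbs time sharing), gives every rate pair in the claimed region.
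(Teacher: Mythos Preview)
The paper does not actually prove this theorem: it is stated as a citation of \cite{YHD2006} with no accompanying argument, so there is no ``paper's own proof'' to compare against. Your sketch is nonetheless the right one and matches the strategy of \cite{YHD2006}: an HSW-type outer code for the common classical layer $W$ decoded by both receivers, an LSD/decoupling inner quantum code for Bob conditioned on the decoded $w^n$, and a gentle-measurement step so that Bob's classical decoding does not spoil the quantum information. The paper's own proof sketch of the analogous entanglement-assisted achievability (Theorem~\ref{thm:achieve-EAC}) uses exactly this layered/constant-composition structure, with the only change being that the inner code is entanglement-assisted rather than LSD; so your plan is fully in line with how the authors think about these regions.

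Two small remarks. First, the theorem is stated for an arbitrary broadcast channel $\mathcal{N}_{A\to BC}$, not a Hadamard one, so your use of the Hadamard-specific symbols $\mathcal{N}_{A\to B}^{\operatorname{H}}$ and $\mathcal{N}_{A\to C}^{\operatorname{H}}$ is a notational slip; nothing in your argument actually uses the Hadamard structure. Second, rather than averaging over all $w^n$ and then expurgating, it is cleaner (and closer to the paper's sketch for Theorem~\ref{thm:achieve-EAC}) to work directly with constant-composition codewords in a fixed typical type class, so that every outer codeword has essentially the same empirical distribution and hence supports an inner LSD code of the same rate $\approx \sum_w p_W(w) I(R\rangle B)_{\mathcal{N}(\varphi^w)} = I(R\rangle BW)_\theta$; this avoids the extra averaging-and-expurgation step you flagged as a ``secondary technical point.''
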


\subsection{Classical--quantum capacity region for Hadamard broadcast
channels}

\begin{theorem}
\label{thm:CQ-HBC} The classical--quantum capacity region of a Hadamard
broadcast channel $\mathcal{N}_{A\rightarrow BC}^{\operatorname{H}}$ is the
set of rate pairs $(Q_{B},R_{C})$ such that%
\begin{align}
Q_{B}  &  \leq I(R\rangle BW)_{\theta},\label{eq:cq-bnd-1}\\
R_{C}  &  \leq I(W;C)_{\theta} , \label{eq:cq-bnd-2}%
\end{align}
for some state%
\begin{equation}
\theta_{WRA}\equiv\sum_{w}p_{W}(w)|w\rangle\langle w|_{W}\otimes\varphi
_{RA}^{w},
\end{equation}
where $p_{W}$ is a probability distribution, each $\varphi_{RA}^{w}$ is a pure
state, and the information quantities are evaluated with respect to the state
$\theta_{WRBC}=\mathcal{N}_{A\rightarrow BC}^{\operatorname{H}}(\theta_{WRA})$.
\end{theorem}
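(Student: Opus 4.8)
The achievability part is immediate from Theorem~\ref{thm:CQ-achieve}. For a Hadamard broadcast channel there is a degrading channel $\mathcal{D}_{B\rightarrow C}$ taking Bob's output to Charlie's, so the data-processing inequality gives $I(W;C)_\theta\leq I(W;B)_\theta$ and hence $\min\{I(W;B)_\theta,I(W;C)_\theta\}=I(W;C)_\theta$; since Theorem~\ref{thm:CQ-achieve} already optimizes over pure-state ensembles $\{\varphi_{RA}^w\}_w$, the region it certifies coincides with \eqref{eq:cq-bnd-1}--\eqref{eq:cq-bnd-2}.

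For the converse I would follow the template of the proof of Theorem~\ref{thm:CC-HBC}. Fix an $(n,M_B,M_C,\varepsilon)$ code with code state $\omega_{MRB^nC^n}$, and record in a register $T$ a spectral decomposition $\rho_{RA^n}^m=\sum_t p(t|m)\varsigma_{RA^n}^{m,t}$ of each codeword into pure states. Charlie's classical rate is bounded exactly as in Theorem~\ref{thm:CC-HBC}: from $\log M_C=I(M;M')_{\overline\Phi}$ one passes to $I(M;M')_\omega$ by the continuity bound, then to $I(M;C^n)_\omega$ by data processing, expands with the chain rule and ``conditioning does not increase entropy,'' and finally uses the Hadamard structure to replace each $C^{i-1}$ by the classical measurement outcomes $Y^{i-1}$ inside a collected state $\zeta$, landing at $nI(\overline{Y}MQ;C)_\zeta$. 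For Bob's quantum rate I would start from $\log M_B=I(R\rangle\widehat{R})_\Phi$, use the Alicki--Fannes--Winter continuity bound for conditional entropy (the reduced state on $R\widehat{R}$ is $\varepsilon$-close to a maximally entangled state) to get $(1-2\varepsilon)\log M_B\leq I(R\rangle\widehat{R})_\omega+g(\varepsilon)$ with $g(\varepsilon)\to 0$, and then data-process through Bob's decoder to reach $I(R\rangle B^n)_\omega$. Padding the receiving side first with the classical register $M$ and then with $T$ cannot decrease the coherent information (the increments are $I(M;R|B^n)_\omega\geq 0$ and $I(T;R|B^nM)_\omega\geq 0$), so $I(R\rangle B^n)_\omega\leq I(R\rangle B^nMT)_\omega$; since $RB^nC^n$ is pure conditioned on $MT$, this equals $H(B^n|MT)_\omega-H(C^n|MT)_\omega$, and two applications of the chain rule together with the Hadamard replacements $B^{i-1}\to Y^{i-1}$ and $C^{i-1}\to Y^{i-1}$ (the first only raising $H(B_i|\cdot)$, the second only raising $H(C_i|\cdot)$, both in the direction needed) yield $\sum_{i=1}^n\big[H(B_i|Y^{i-1}MT)_{\zeta^i}-H(C_i|Y^{i-1}MT)_{\zeta^i}\big]$.

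The step I expect to be the crux is recognizing this last expression as a conditional coherent information $I(R\rangle BW)$ on a state whose conditional inputs are \emph{pure} states on $R$ and $A$, as the theorem demands. The difficulty is that, once $C^{i-1}$ has been traced out, the per-letter input conditioned on $(i,m,t,y^{i-1})$ is in general mixed. I would therefore purify it with a fresh reference $E$, writing $\tau_{A_i}^{m,t,y^{i-1}}=\operatorname{Tr}_E[\varphi_{RA_iE}^{m,t,y^{i-1}}]$ with $\varphi$ pure, and enlarge $\zeta^i$ to carry $E$; then $\mathcal{N}_{A_i\rightarrow B_iC_i}^{\operatorname{H}}(\varphi_{RA_iE}^{m,t,y^{i-1}})$ is pure on $(RE)B_iC_i$, so by conditional purity $H(C_i|Y^{i-1}MT)_{\zeta^i}=H((RE)B_i|Y^{i-1}MT)_{\zeta^i}$ and the $i$-th bracket equals $I((RE)\rangle B_i|Y^{i-1}MT)_{\zeta^i}$. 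Introducing a uniform time-sharing register $Q$ then collects these into $nI((RE)\rangle B|\overline{Y}MTQ)_\zeta$.

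Combining the two bounds, dividing by $n$, enlarging the conditioning in Charlie's bound from $\overline{Y}MQ$ to $\overline{Y}MTQ$ (harmless, since $I(\overline{Y}MQ;C)_\zeta\leq I(\overline{Y}MTQ;C)_\zeta$), and identifying $W=\overline{Y}MTQ$ and the theorem's reference system with $RE$, I would obtain inequalities of the form
\begin{align}
\tfrac{1-2\varepsilon}{n}\log M_B&\leq I(R\rangle BW)_\zeta+\tfrac{1}{n}g(\varepsilon),\\
\tfrac{1-\varepsilon}{n}\log M_C&\leq I(W;C)_\zeta+\tfrac{1}{n}h_2(\varepsilon),
\end{align}
where $\zeta_{WRBC}=\mathcal{N}_{A\rightarrow BC}^{\operatorname{H}}(\zeta_{WRA})$ has precisely the structure of the state $\theta$ in the theorem statement. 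Taking a sequence of codes with $\varepsilon\to 0$ as $n\to\infty$ then yields the region \eqref{eq:cq-bnd-1}--\eqref{eq:cq-bnd-2}. To summarize, the two places that require genuine care are (i) introducing the purifying reference $E$ so that the single-letter state has pure conditional inputs on $R$ and $A$, and (ii) keeping the auxiliary system $W$ identical in both rate bounds; the remaining manipulations mirror the proof of Theorem~\ref{thm:CC-HBC}.
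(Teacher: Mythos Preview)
Your proposal is correct and follows essentially the same route as the paper's proof: the paper likewise spectrally decomposes the codewords into a register $T$, handles Charlie's rate exactly as in Theorem~\ref{thm:CC-HBC} (then enlarges $W$ by $T$ via data processing), bounds Bob's rate by passing from $I(R\rangle\widehat{R})$ to $I(R\rangle B^nMT)$, uses conditional purity and the Hadamard replacements $B^{i-1}\!\to Y^{i-1}$, $C^{i-1}\!\to Y^{i-1}$, and then purifies the conditional state $\tau_{RA_i}^{m,t,y^{i-1}}$ with an auxiliary system (called $S$ there, your $E$) so that $(SR)B_iC_i$ is pure given $Y^{i-1}MT$, finally collecting with a time-sharing register $Q$ and identifying $W=\overline{Y}MTQ$ and the reference with $SR$. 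Two cosmetic slips to fix: the state you purify is $\tau_{RA_i}^{m,t,y^{i-1}}$ (not $\tau_{A_i}$), and the replacement $C^{i-1}\to Y^{i-1}$ \emph{lowers} $H(C_i|\cdot)$ rather than raises it, which is indeed the direction you need.
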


\begin{proof}
The achievability part follows as a direct consequence of
Theorem~\ref{thm:CQ-achieve}, by combining data processing with the fact that
a Hadamard broadcast channel is degradable.

To begin the proof of the converse part, let a spectral decomposition for the
state $\rho_{RA^{n}}^{m}$ be as follows:%
\begin{equation}
\rho_{RA^{n}}^{m}=\sum_{t}p(t|m)\varsigma_{RA^{n}}^{m,t},
\end{equation}
with each $\varsigma_{RA^{n}}^{m,t}$ pure, so that an extension of the state
$\omega_{MRB^{n}C^{n}}$ in \eqref{eq:cq-code-state} is as follows:%
\begin{equation}
\omega_{MRB^{n}C^{n}T}=\frac{1}{M_{B}}\sum_{m,t}|m\rangle\langle m|_{M}%
\otimes\mathcal{N}_{A\rightarrow BC}^{\operatorname{H}\otimes n}%
(\varsigma_{RA^{n}}^{m,t})\otimes p(t|m)|t\rangle\langle t|_{T}.
\end{equation}
For $i\in\{1,\ldots,n\}$, let%
\begin{align}
\zeta_{MRB_{i}C_{i}Y^{i-1}T}^{i} &  \equiv(\operatorname{id}_{MRB_{i}C_{i}%
T}\otimes\mathcal{M}_{B\rightarrow Y}^{\otimes\left(  i-1\right)  }%
\otimes\operatorname{Tr}_{C^{i-1}})(\omega_{MRB^{i}C^{i}T})\\
&  =\frac{1}{M_{B}}\sum_{m,t}|m\rangle\langle m|_{M}\otimes(\mathcal{N}%
_{A_{i}\rightarrow B_{i}C_{i}}^{\operatorname{H}}\otimes\left[  \left(
\mathcal{M}_{B\rightarrow Y}^{\otimes\left(  i-1\right)  }\otimes
\operatorname{Tr}_{C^{i-1}}\right)  \circ\mathcal{N}_{A\rightarrow
BC}^{\operatorname{H}\otimes i-1}\right]  )(\varsigma_{RA^{i}}^{m,t}%
)\nonumber\\
&  \qquad\qquad\otimes p(t|m)|t\rangle\langle t|_{T}\\
&  =\frac{1}{M_{B}}\sum_{m,y^{i-1},t}|m\rangle\langle m|_{M}\otimes
\mathcal{N}_{A_{i}\rightarrow B_{i}C_{i}}^{\operatorname{H}}(\tau_{RA_{i}%
}^{m,t,y^{i-1}})\otimes p(y^{i-1}|m,t)|y^{i-1}\rangle\langle y^{i-1}%
|_{Y^{i-1}}\\
&  \qquad\qquad\otimes p(t|m)|t\rangle\langle t|_{T},
\end{align}
where%
\begin{equation}
\left[  \left(  \mathcal{M}_{B\rightarrow Y}^{\otimes\left(  i-1\right)
}\otimes\operatorname{Tr}_{C^{i-1}}\right)  \circ\mathcal{N}_{A\rightarrow
BC}^{\operatorname{H}\otimes i-1}\right]  (\varsigma_{RA^{i}}^{m,t}%
)=\sum_{y^{i-1}}\tau_{RA_{i}}^{m,t,y^{i-1}}\otimes p(y^{i-1}|m,t)|y^{i-1}%
\rangle\langle y^{i-1}|_{Y^{i-1}}.
\end{equation}
Letting $\varphi_{SRA_{i}}^{m,t,y^{i-1}}$\ denote a purification of
$\tau_{RA_{i}}^{m,t,y^{i-1}}$, we find that an extension of $\zeta
_{MRB_{i}C_{i}Y^{i-1}T}^{i}$ is%
\begin{multline}
\zeta_{MSRB_{i}C_{i}Y^{i-1}T}^{i}\equiv\frac{1}{M_{B}}\sum_{m,t,y^{i-1}%
,s}|m\rangle\langle m|_{M}\otimes\mathcal{N}_{A_{i}\rightarrow B_{i}C_{i}%
}^{\operatorname{H}}(\varphi_{SRA_{i}}^{m,t,y^{i-1}})\\
\otimes p(y^{i-1}|m,t)|y^{i-1}\rangle\langle y^{i-1}|_{Y^{i-1}}\otimes
p(t|m)|t\rangle\langle t|_{T}.
\end{multline}
Let $\zeta_{QMSRBC\overline{Y}T}$ denote the following state:%
\begin{equation}
\zeta_{QMSRBC\overline{Y}T}\equiv\frac{1}{n}\sum_{i=1}^{n}|i\rangle\langle
i|_{Q}\otimes\zeta_{MSRB_{i}C_{i}Y^{i-1}T}^{i},
\end{equation}
where $\overline{Y}$ is large enough to hold the values in each $Y^{i-1}$ (and
zero-padded if need be). 

The following information bound is a consequence of reasoning identical to
that in \eqref{eq:first-C-bound}--\eqref{eq:last-C-bound}:%
\begin{equation}
\frac{1-\varepsilon}{n}\log M_{C}\leq I(W;C)_{\zeta}+\frac{1}{n}%
h_{2}(\varepsilon),
\end{equation}
identifying $W$ as $\overline{Y}MTQ$. (To get this bound, we require a final
step of data processing to have system $T$ be included in $W$.)

We now prove the other bound. Consider that%
\begin{align}
\log M_{B}  &  =I(R\rangle\widehat{R})_{\Phi}\\
&  \leq I(R\rangle\widehat{R})_{\omega}+2\varepsilon\log M_{B}+g(\varepsilon),
\end{align}
where the inequality follows from the main result of \cite{Winter15} and
$g(\varepsilon)\equiv(1+\varepsilon)\log_{2}(1+\varepsilon)-\varepsilon
\log_{2}(\varepsilon)$, with the property that $\lim_{\varepsilon\rightarrow
0}g(\varepsilon)=0$. Continuing, we find that%
\begin{align}
I(R\rangle\widehat{R})_{\omega}  &  \leq I(R\rangle B^{n}MT)_{\omega}\\
&  =H(B^{n}|MT)_{\omega}-H(RB^{n}|MT)_{\omega}\\
&  =H(B^{n}|MT)_{\omega}-H(C^{n}|MT)_{\omega}\\
&  =\sum_{i=1}^{n}H(B_{i}|B^{i-1}MT)_{\omega}-H(C_{i}|C^{i-1}MT)_{\omega}
\label{eq:last-coh-info-block}%
\end{align}
The first inequality follows from quantum data processing. The first equality
follows from definitions. The second equality follows because the state of
systems $RB^{n}C^{n}$ is pure when conditioned on systems $MT$. The third
equality follows from the chain rule for conditional entropy. Continuing,
\begin{align}
\eqref{eq:last-coh-info-block}  &  \leq\sum_{i=1}^{n}H(B_{i}|Y^{i-1}%
MT)_{\zeta^{i}}-H(C_{i}|Y^{i-1}MT)_{\zeta^{i}}\\
&  =\sum_{i=1}^{n}H(B_{i}|Y^{i-1}MT)_{\zeta^{i}}-H(SRB_{i}|Y^{i-1}%
MT)_{\zeta^{i}}\\
&  =\sum_{i=1}^{n}I(SR\rangle B_{i}Y^{i-1}MT)_{\zeta^{i}}\\
&  =nI(SR\rangle B\overline{Y}MTQ)_{\zeta}.
\end{align}
The inequality follows from the structure of a Hadamard channel: systems
$Y^{i-1}$ can be simulated by systems $B^{i-1}$ and systems $C^{i-1}$ can be
simulated by systems $Y^{i-1}$. The second equality follows because the state
of systems $SRB_{i}$ is pure when conditioned on systems $Y^{i-1}MT$. The
third equality is by definition, and the last follows by evaluating the
coherent information of the given state and systems. Putting everything
together leads to the following two bounds:%
\begin{align}
\frac{1-2\varepsilon}{n}\log M_{B}  &  \leq I(R\rangle BW)_{\zeta}+\frac{1}%
{n}g(\varepsilon),\\
\frac{1-\varepsilon}{n}\log M_{C}  &  \leq I(W;C)_{\zeta}+\frac{1}{n}%
h_{2}(\varepsilon),
\end{align}
relabeling $R$ as $SR$ and taking $W$ as $\overline{Y}MTQ$, as stated above.
Now that we have established that these inequalities hold for an arbitrary
$(n,M_{B},M_{C},\varepsilon)$ classical--quantum code, considering a sequence
$\{(n,M_{B},M_{C},\varepsilon_{n})\}_{n}$ of them with $\varepsilon
_{n}\rightarrow0$ as $n\rightarrow\infty$, we find that the rate region is
characterized by \eqref{eq:cq-bnd-1}--\eqref{eq:cq-bnd-2}.
\end{proof}

\section{Partially entanglement-assisted classical capacity region of a
Hadamard broadcast channel\label{sec:EAC-task}}

\subsection{Definition of the partially entanglement-assisted classical
capacity region of a broadcast channel}

We now define the partially entanglement-assisted classical capacity region of
a quantum broadcast channel \cite{YHD2006}. Let $\mathcal{N}_{A\rightarrow
BC}$ denote a quantum broadcast channel from a sender Alice to receivers Bob
and Charlie. Let $n\in\mathbb{N}$, $M_{B},M_{C}\in\mathbb{N}$, and
$\varepsilon\in\lbrack0,1]$. An $(n,M_{B},M_{C},\varepsilon)$ partially
entanglement-assisted code for classical communication over the broadcast
channel $\mathcal{N}_{A\rightarrow BC}$ consists of a shared (generally
mixed)\ entangled state $\Psi_{R_{0}R_{0}^{\prime}}$, such that Alice has
system $R_{0}^{\prime}$ and Bob has system $R_{0}$. Such a code also consists
of a set of encoding channels $\{\mathcal{E}_{R_{0}^{\prime}\rightarrow A^{n}%
}^{m_{1},m_{2}}\}_{m_{1},m_{2}}$, and POVMs $\{\Lambda_{R_{0}B^{n}}^{m_{1}%
}\}_{m_{1}}$ and $\{\Gamma_{C^{n}}^{m_{2}}\}_{m_{2}}$ such that the message
pair $(m_{1},m_{2})$ is communicated with average success probability not
smaller than $1-\varepsilon$:%
\begin{equation}
\sum_{m_{1},m_{2}}\frac{\operatorname{Tr}\{(\Lambda_{R_{0}B^{n}}^{m_{1}%
}\otimes\Gamma_{C^{n}}^{m_{2}})\mathcal{N}_{A\rightarrow BC}^{\otimes
n}(\mathcal{E}_{R_{0}^{\prime}\rightarrow A^{n}}^{m_{1},m_{2}}(\Psi
_{R_{0}R_{0}^{\prime}}))\}}{M_{B}M_{C}} \geq1-\varepsilon
.\label{eq:EAC-success-prob-broadcast}%
\end{equation}

A rate pair $(R_{B},R_{C})$ is achievable for partially entanglement-assisted
classical communication on $\mathcal{N}_{A\rightarrow BC}$\ if for all
$\varepsilon\in(0,1)$, $\delta>0$, and sufficiently large $n$, there exists an
$(n,2^{n\left[  R_{B}-\delta\right]  },2^{n\left[  R_{C}-\delta\right]
},\varepsilon)$ code of the above form. The classical capacity region of
$\mathcal{N}_{A\rightarrow BC}$ is equal to the closure of all achievable rate pairs.

\subsection{Achievable rate region for an arbitrary quantum broadcast channel}

We now argue an achievability statement based on some prior coding schemes
from \cite{Shor04,cmp2005dev,itit2008hsieh}:

\begin{theorem}
\label{thm:achieve-EAC}Given a quantum broadcast channel $\mathcal{N}%
_{A\rightarrow BC}$, a rate pair $(R_{B},R_{C})$ is achievable for partially
entanglement-assisted classical communication on $\mathcal{N}_{A\rightarrow
BC}$ if%
\begin{align}
R_{B}  &  \leq I(R;B|W)_{\theta},\\
R_{C}  &  \leq\min\{I(W;B)_{\theta},I(W;C)_{\theta}\},
\end{align}
where the information quantities are evaluated with respect to a state
$\theta_{WRBC}$ of the following form:%
\begin{equation}
\sum_{w}p_{W}(w)|w\rangle\langle w|_{W}\otimes\mathcal{N}_{A\rightarrow
BC}(\varphi_{RA}^{w}),
\end{equation}
with $p_{W}$ a probability distribution and $\{\varphi_{RA}^{w}\}_{w}$ a set
of states.
\end{theorem}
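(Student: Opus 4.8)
The plan is to construct a coding scheme that combines superposition coding for the common/weaker message with entanglement-assisted classical communication for Bob's message, layered on top. First I would fix a state $\theta_{WRBC}=\sum_w p_W(w)|w\rangle\langle w|_W\otimes \mathcal{N}_{A\rightarrow BC}(\varphi^w_{RA})$ of the stated form. The outer code is an ordinary classical-communication superposition code: Alice generates $M_C$ codewords $w^n(m_2)$ i.i.d.\ according to $p_W^{\otimes n}$, and Charlie decodes $m_2$ by a POVM on $C^n$, succeeding as long as $R_C \leq I(W;C)_\theta - \delta$, by the HSW theorem. Simultaneously, Bob must first decode $m_2$ from $B^n$ (using that $R_C \leq I(W;B)_\theta$, which is why the $\min$ appears) so that he knows which "cloud" he is in; this is the standard superposition-coding step from \cite{YHD2006,SW11}.

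Conditioned on the cloud center $w^n(m_2)$, I would then run an entanglement-assisted classical code for the induced channel $\varphi^{w}_{RA}\mapsto\mathcal{N}_{A\rightarrow BC}(\varphi^w_{RA})$ on Bob's side. Here I invoke the entanglement-assisted classical capacity theorem (the one-shot/i.i.d.\ results of \cite{Shor04,cmp2005dev,itit2008hsieh}): Alice and Bob pre-share $n$ copies of $\varphi^{w}_{RA}$ restricted appropriately — more precisely, for each position $i$ with $w_i = w$, Alice holds the $A$-part and Bob the $R$-part of a copy of $\varphi^w_{RA}$, and Alice performs a conditional encoding (a set of unitaries or a more general encoding channel indexed by $m_1$) on her $A^n$ systems before sending through $\mathcal{N}^{\otimes n}$. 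Since Bob has already decoded $w^n(m_2)$ and holds the reference systems $R^n$ together with the channel outputs $B^n$, he can decode $m_1$ at any rate $R_B \leq I(R;B|W)_\theta - \delta$, because conditioned on $W=w$ the relevant quantity is the entanglement-assisted mutual information $I(R;B)$ of the state $\mathcal{N}_{A\rightarrow B}(\varphi^w_{RA})$, and averaging over $w\sim p_W$ gives the conditional form $I(R;B|W)_\theta$. The pre-shared state used here is exactly $\Psi_{R_0 R_0'} = (\varphi^w_{RA})^{\otimes(\text{block lengths})}$ bundled across the $w$-classes, which is a legitimate (generally mixed) shared entangled resource as required by the definition in \eqref{eq:EAC-success-prob-broadcast}.

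The key steps in order: (1) random superposition codebook generation for $W$; (2) Charlie's HSW decoding of $m_2$ giving the constraint $R_C\le I(W;C)_\theta$; (3) Bob's HSW decoding of $m_2$ giving the constraint $R_C\le I(W;B)_\theta$ (hence the $\min$); (4) a conditional entanglement-assisted encoding/decoding layer on the $w^n$-selected positions giving $R_B\le I(R;B|W)_\theta$; (5) a union bound over the three error events and a standard expurgation/derandomization argument to convert expected error into a fixed code with the claimed parameters. Throughout one uses that the pre-shared entanglement is consumed only between Alice and Bob, so Charlie's analysis is unaffected by it.

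The main obstacle — or at least the point needing care — is the interleaving of the two coding layers: the entanglement-assisted inner code is run only on the subset of channel uses where the cloud center takes a particular value, and one must argue that conditioning on the (correctly decoded) outer message $m_2$ and the codebook leaves Bob with i.i.d.\ copies of $\mathcal{N}_{A\rightarrow B}(\varphi^w_{RA})$ for each $w$, so that the per-class entanglement-assisted coding theorem applies and the rates add up to the conditional mutual information $I(R;B|W)_\theta$. A secondary technical point is handling the case where $\varphi^w_{RA}$ is mixed rather than pure (the statement allows a general set of states); one purifies, or equivalently absorbs the purifying system into $R$, and notes that the entanglement-assisted mutual information $I(R;B)$ is unaffected by giving Bob a larger reference. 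These are routine once the layered structure is set up, and they mirror the arguments in \cite{YHD2006} combined with \cite{cmp2005dev,itit2008hsieh}.
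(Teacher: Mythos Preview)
Your approach is essentially the paper's: layer an outer superposition/HSW code for $m_2$ (decoded by both Bob and Charlie, giving the $\min$) beneath an inner entanglement-assisted code for $m_1$ (decoded by Bob using his reference systems). The one technical difference worth flagging is your choice of outer codebook. You generate codewords $w^n(m_2)$ i.i.d.\ from $p_W^{\otimes n}$, whereas the paper uses a \emph{constant-composition} code: a single strongly typical sequence $w^n$ is fixed in advance, and the codewords for $m_2$ are permutations of it. The reason this matters is exactly the point you call ``the main obstacle.'' With constant composition, every codeword has the same type, so the pre-shared entanglement $\Psi_{R_0R_0'}=\varphi^{w^n}_{R^nA^n}$ is a single fixed state independent of $m_2$; Alice's outer encoding is just a permutation of her $A^n$ systems, and once Bob decodes that permutation he undoes it on $B^n$ to realign with his untouched $R^n$ and then runs the per-$w$-block EA decoder. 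In your random-coding description, the phrase ``for each position $i$ with $w_i=w$'' makes the shared resource depend on the particular codeword $w^n(m_2)$, hence on $m_2$, which the task definition does not allow. Your scheme can be patched (pre-share $\bigotimes_w(\varphi^w_{RA})^{\otimes N_w}$ with $N_w$ large enough to cover every typical codeword, and have Alice route the appropriate shares into the channel according to $m_2$), but the constant-composition device handles this cleanly in one stroke, which is why the paper invokes it.
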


\begin{proof}
We merely sketch a proof rather than work out the details, mainly because the
ideas for it have been used in a variety of contexts. The idea is similar to
that used in trade-off coding for transmitting both classical and quantum
information over a single-sender single-receiver quantum channel (see in
particular \cite[Theorem 22.5.1]{W15book}). Fix a probability distribution
$p_{W}(w)$ and a corresponding set of pure states $\{\varphi_{RA}^{w}\}_{w}$.
Pick a typical type class $T_{t}$, meaning the set of all sequences with the
same empirical distribution $t(w)$ that deviates from the true distribution
$p_{W}(w)$ by no more than $\delta>0$. All the sequences in the same type
class are related to one another by a permutation, and all of them are
strongly typical. Now we suppose that Alice and Bob share the state%
\begin{equation}
\varphi_{R^{n}A^{n}}^{w^{n}}\equiv\varphi_{R_{1}A_{1}}^{w_{1}}\otimes
\cdots\otimes\varphi_{R_{n}A_{n}}^{w_{n}},
\end{equation}
where the sequence $w^{n}\in T_{t}$, Alice has the $A^{n}$ systems, and Bob
has the $R^{n}$ systems. The reduced state after tracing over Bob's systems is
$\varphi_{A^{n}}^{w^{n}}=\varphi_{A_{1}}^{w_{1}}\otimes\cdots\otimes
\varphi_{A_{n}}^{w_{n}}$.

The main idea is for Alice to layer the messages on top of each other as in
superposition coding. She first encodes a classical message as a permutation
of the sequence $w^{n}$ (using a constant-composition code as discussed in
\cite[Section~20.3.1]{W15book}), sending the $A^{n}$ systems of the state
$\varphi_{R^{n}A^{n}}^{w^{n}}$ through $n$ uses of the broadcast channel. Bob
and Charlie then decode, and Bob neglects his systems $R^{n}$ in this first
decoding step. It is possible for each of them to decode reliably as long as
the rate $R_{C}\leq\min\{I(W;B)_{\theta},I(W;C)_{\theta}\}$. At the same time,
Alice can encode another message intended exclusively for Bob as an
entanglement-assisted code into the states $\varphi_{R^{n}A^{n}}^{w^{n}}$.
This is possible by arranging the sequence of states $\varphi_{R^{n}A^{n}%
}^{w^{n}}$ into $\left\vert W\right\vert $ blocks of $\approx np_{W}(w)$
i.i.d.~states of the form $\varphi_{RA}^{w}$. For each block, she employs the
coding scheme of \cite{itit2008hsieh} for entanglement-assisted coding at a
rate $I(R;B)_{\mathcal{N}(\varphi^{w})}$, such that the total rate for the
message intended for Bob is $\sum_{w}p_{W}(w)I(R;B)_{\mathcal{N}(\varphi^{w}%
)}=I(R;B|W)_{\theta}$. So, in a second decoding step after Bob determines
which permutation of the sequence $w^{n}$ Alice transmitted, he can rearrange
his systems $R^{n}A^{n}$ into the above standard form to decode the message
encoded in the entanglement-assisted codes. This gives the achievable rate
region above. See the discussion in the proof of \cite[Theorem 22.5.1]%
{W15book} for more details.
\end{proof}

\subsection{Partially entanglement-assisted classical capacity of a Hadamard
broadcast channel}

We now determine the partially entanglement-assisted classical capacity region
of a Hadamard broadcast channel:

\begin{theorem}
\label{thm:EAC-HBC} The partially entanglement-assisted classical capacity
region of a Hadamard broadcast channel $\mathcal{N}_{A\rightarrow
BC}^{\operatorname{H}}$ is the set of rate pairs $(R_{B},R_{C})$ such that%
\begin{align}
R_{B}  &  \leq I(R;B|W)_{\theta},\label{eq:EAC-bnd-1}\\
R_{C}  &  \leq I(W;C)_{\theta}, \label{eq:EAC-bnd-2}%
\end{align}
for some state%
\begin{equation}
\theta_{WRA}=\sum_{w}p_{W}(w)|w\rangle\langle w|_{W}\otimes\varphi_{RA}^{w},
\label{eq:EAC-code-state}%
\end{equation}
where $p_{W}$ is a probability distribution, each $\varphi_{RA}^{w}$ is a pure
state, and the information quantities are evaluated with respect to the state
$\theta_{WRBC}=\mathcal{N}_{A\rightarrow BC}^{\operatorname{H}}(\theta_{WRA})$.
\end{theorem}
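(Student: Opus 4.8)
The achievability direction is immediate from Theorem~\ref{thm:achieve-EAC}: for a Hadamard broadcast channel the degrading channel from Bob's to Charlie's system gives $I(W;C)_\theta\le I(W;B)_\theta$, so the bound $R_C\le\min\{I(W;B)_\theta,I(W;C)_\theta\}$ collapses to $R_C\le I(W;C)_\theta$, and specializing the states $\{\varphi_{RA}^w\}$ to be pure exhibits the stated region as a set of achievable rate pairs. So the real work is the converse, which I would carry out in close analogy with the proofs of Theorems~\ref{thm:CC-HBC} and~\ref{thm:CQ-HBC}. Given an $(n,M_B,M_C,\varepsilon)$ partially entanglement-assisted code with shared state $\Psi_{R_0R_0'}$ (Bob holding $R_0$), I would form the post-channel state $\omega_{M_1M_2R_0B^nC^n}$ and record that $\frac12\|\omega_{M_1M_2M_1'M_2'}-\overline{\Phi}_{M_1M_1'}\otimes\overline{\Phi}_{M_2M_2'}\|_1\le\varepsilon$ after the decoders act. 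As before I would spectral-decompose each conditional input $\mathcal{E}_{R_0'\rightarrow A^n}^{m_1,m_2}(\Psi_{R_0R_0'})$ into pure states on $R_0A^n$ (eigenvalue register $T$), introduce the states $\zeta^i$ obtained by applying $\mathcal{M}_{B\rightarrow Y}^{\otimes(i-1)}$ to the first $i-1$ outputs and tracing out $C^{i-1}$, spectral-decompose the resulting single-letter state again (register $S$), and collect everything with a time-sharing register $Q$.

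The $R_C$ bound would then be obtained exactly as in \eqref{eq:first-C-bound}--\eqref{eq:last-C-bound}: Charlie's output $C^n$ is simulated by $\mathcal{M}_{B\rightarrow Y}^{\otimes n}$ followed by $\mathcal{P}_{Y\rightarrow C}^{\otimes n}$, and since the register $R_0$ plays no role for Charlie the entire chain of inequalities goes through verbatim, giving $\frac{1-\varepsilon}{n}\log M_C\le I(W;C)_\zeta+\frac1n h_2(\varepsilon)$ for a suitable classical register $W$ built from $\overline{Y}$, $M_2$, $Q$ (and, after a data-processing step, $T$ and $S$). For the $R_B$ bound I would start from $\log M_B=I(M_1;M_1')_{\overline{\Phi}}\le I(M_1;M_1')_\omega+\varepsilon\log M_B+h_2(\varepsilon)\le I(M_1;R_0B^n|M_2)_\omega$, using that Bob's decoding POVM $\{\Lambda_{R_0B^n}^{m_1}\}$ acts jointly on $R_0$ and $B^n$ and that $I(M_1;M_2)_\omega=0$. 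The new feature relative to Theorem~\ref{thm:CC-HBC} is the pre-shared register $R_0$. A verbatim imitation of the CC argument would leave behind a term $\frac1nH(R_0|B^nM_2)$, which cannot simply be discarded because the code may consume $\Theta(n)$ ebits. I would remove it with the device used in the single-sender entanglement-assisted converse: conditioned on $m_2$, let $\phi_{R_{\mathrm{tot}}A^n}^{m_2}$ purify the average channel input $\frac1{M_B}\sum_{m_1}\mathcal{E}_{R_0'\rightarrow A^n}^{m_1,m_2}(\Psi_{R_0'})$, where $R_{\mathrm{tot}}$ bundles a coherent copy of $M_1$, the register $R_0$, and an environment register of the encoding; then $\omega_{M_1M_2R_0B^n}$ (conditioned on $m_2$) is a data-processing image of $\mathcal{N}_{A\rightarrow BC}^{\operatorname{H}\otimes n}(\phi^{m_2})$ obtained by measuring the $M_1$-copy, discarding the environment, and keeping $R_0$, so that $I(M_1;R_0B^n|M_2)_\omega\le\mathbb{E}_{m_2}\,I(R_{\mathrm{tot}};B^n)_{\mathcal{N}^{\operatorname{H}\otimes n}(\phi^{m_2})}$ --- an entanglement-assisted mutual information with a \emph{pure} channel input. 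To this quantity I would apply the Hadamard single-letterization (replace $B^{i-1}$ by the measurement outcomes $Y^{i-1}$, use purity from the isometric structure of $\mathcal{N}^{\operatorname{H}}$ together with the spectral decompositions, and collect with $Q$), pad the auxiliary register so that it coincides with the $W$ from the $R_C$ bound, and arrive at $\frac{1-\varepsilon}{n}\log M_B\le I(R;B|W)_\zeta+o(1)$ with $R$ the purifying register (containing $R_0$) and $W$ the classical registers, the final state having the form $\theta_{WRA}=\sum_w p_W(w)|w\rangle\langle w|_W\otimes\varphi_{RA}^w$ with each $\varphi_{RA}^w$ pure. Taking $n\rightarrow\infty$ along a sequence with $\varepsilon_n\rightarrow0$ then finishes the converse.

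The main obstacle, and the only genuine departure from the proof of Theorem~\ref{thm:CC-HBC}, is the bookkeeping for the shared-entanglement register $R_0$: one must arrange simultaneously that the resulting bound is a mutual information $I(R;B|W)$ rather than a coherent information, that no residual term scaling with the entanglement dimension survives (this is exactly what forces the purification/data-processing reduction above), and that this reduction interlocks with the Hadamard-degrading single-letterization so that a single classical $W$ serves both rate bounds and each $\varphi_{RA}^w$ is pure --- which in turn dictates that every ``ensemble-index'' register (the message $m_1$, the eigenvalue labels $t,s$, and the time-sharing index) be absorbed into $W$ while $R$ is taken to be a coherent purifying extension of $R_0$. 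Everything else --- the continuity estimates, the chain rules, and the data-processing steps for conditional mutual information and conditional entropy --- is routine and mirrors the earlier two proofs.
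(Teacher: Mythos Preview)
Your strategy is close to the paper's and you correctly isolate the one genuinely new difficulty (the pre-shared register $R_0$), but the register allocation you propose in the last paragraph is wrong and would not close the converse. In the paper's proof the classical auxiliary is $W=\overline{Y}M_2Q$ only; the message register $M_1$ and the purifying registers $T,S$ are absorbed into $R$ (so $R=R_0STM_1$), not into $W$. Moreover, $T$ and $S$ here are \emph{quantum} purifying systems, not classical eigenvalue labels: $T$ purifies $\varsigma_{M_1R_0A^n}^{m_2}$ (conditioned on $m_2$ only, not on $m_1$), and $S$ purifies the single-letter state $\tau_{TM_1R_0A_i}^{m_2,y^{i-1}}$. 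You cannot place $M_1$ in the conditioning register: the rate you are bounding is essentially $I(M_1;\cdot)$, and there is no general inequality taking $I(M_1R_0T;B^n|M_2)$ to something conditioned on $M_1$. Your two separate extensions---the spectral decomposition conditioned on $(m_1,m_2)$ for the setup, and a quantum purification conditioned on $m_2$ for the $R_B$ step---therefore do not mesh into a single state $\theta$ of the required form; the paper uses only the second, and the $R_C$ chain \eqref{eq:first-C-bound}--\eqref{eq:last-C-bound} holds verbatim for it since that chain never touches $T$.

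The paper also resolves the $R_0$ issue by a cleaner device than your purification-then-data-processing reduction: the identity
\[
I(M_1;R_0B^n|M_2)=I(M_1R_0;B^n|M_2)+I(M_1;R_0|M_2)-I(B^n;R_0|M_2)\le I(M_1R_0;B^n|M_2),
\]
using only $I(M_1;R_0|M_2)=0$ (Alice's encoders never act on $R_0$) and non-negativity of mutual information. After enlarging to $I(M_1R_0T;B^n|M_2)$, purity of $M_1R_0TB^nC^n$ given $m_2$ yields $H(B^n|M_1R_0TM_2)=-H(B^n|C^nM_2)$, and the Hadamard single-letterization then runs as in Theorem~\ref{thm:CC-HBC}. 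Your $I(R_{\mathrm{tot}};B^n)$ route lands at the same $n$-letter quantity and would also work, provided you put $M_1$ and the purifying registers into $R$ rather than $W$.
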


\begin{proof}
The achievability part follows from a direct application of
Theorem~\ref{thm:achieve-EAC}, by combining data processing with the fact that
a Hadamard broadcast channel is degradable.

For the converse, consider an arbitrary $(n,M_{B},M_{C},\varepsilon)$ code for
the broadcast Hadamard channel $\mathcal{N}_{A\rightarrow BC}%
^{\operatorname{H}}$. Let $\omega_{M_{1}M_{2}R_{0}B^{n}C^{n}}$ denote the
following state:%
\begin{equation}
\omega_{M_{1}M_{2}R_{0}B^{n}C^{n}}\equiv\frac{1}{M_{B}M_{C}}\sum_{m_{1},m_{2}%
}|m_{1}\rangle\langle m_{1}|_{M_{1}} \otimes|m_{2}\rangle\langle m_{2}%
|_{M_{2}}\otimes\mathcal{N}_{A\rightarrow BC}^{\operatorname{H}\otimes
n}(\mathcal{E}_{R_{0}^{\prime}\rightarrow A^{n}}^{m_{1},m_{2}}(\Psi
_{R_{0}R_{0}^{\prime}})),
\end{equation}
so that this is the state before the receivers act with their measurements.
The post-measurement state is as follows:%
\begin{multline}
\omega_{M_{1}M_{2}M_{1}^{\prime}M_{2}^{\prime}} \equiv\sum_{m_{1}%
,m_{1}^{\prime},m_{2},m_{2}^{\prime}}p(m_{1},m_{1}^{\prime},m_{2}%
,m_{2}^{\prime})|m_{1}\rangle\langle m_{1}|_{M_{1}}\\
\otimes|m_{1}^{\prime}\rangle\langle m_{1}^{\prime}|_{M_{1}^{\prime}}%
\otimes|m_{2}\rangle\langle m_{2}|_{M_{2}}\otimes|m_{2}^{\prime}\rangle\langle
m_{2}^{\prime}|_{M_{2}^{\prime}},
\end{multline}
where%
\begin{equation}
p(m_{1},m_{1}^{\prime},m_{2},m_{2}^{\prime}) =\frac{\operatorname{Tr}%
\{(\Lambda_{R_{0}B^{n}}^{m_{1}^{\prime}}\otimes\Gamma_{C^{n}}^{m_{2}^{\prime}%
})\mathcal{N}_{A\rightarrow BC}^{\operatorname{H}\otimes n}(\mathcal{E}%
_{R_{0}^{\prime}\rightarrow A^{n}}^{m_{1},m_{2}}(\Psi_{R_{0}R_{0}^{\prime}%
}))\}}{M_{B}M_{C}}.
\end{equation}
From the condition in \eqref{eq:EAC-success-prob-broadcast}, it follows that%
\begin{equation}
\frac{1}{2}\left\Vert \omega_{M_{1}M_{2}M_{1}^{\prime}M_{2}^{\prime}%
}-\overline{\Phi}_{M_{1}M_{1}^{\prime}}\otimes\overline{\Phi}_{M_{2}%
M_{2}^{\prime}}\right\Vert _{1}\leq\varepsilon,
\end{equation}
where $\overline{\Phi}_{M_{i}M_{i}^{\prime}}\equiv\frac{1}{|M_{i}|}\sum
_{m_{i}}|m_{i}\rangle\langle m_{i}|_{M_{i}}\otimes|m_{i}\rangle\langle
m_{i}|_{M_{i}^{\prime}}$ is the maximally correlated state for $i\in\{1,2\}$.

For a fixed $m_{2}$, let $\varsigma_{M_{1}R_{0}A^{n}}^{m_{2}}$ denote the
following state:%
\begin{equation}
\varsigma_{M_{1}R_{0}A^{n}}^{m_{2}}\equiv\frac{1}{M_{B}}\sum_{m_{1}}%
|m_{1}\rangle\langle m_{1}|_{M_{1}}\otimes\mathcal{E}_{R_{0}^{\prime
}\rightarrow A^{n}}^{m_{1},m_{2}}(\Psi_{R_{0}R_{0}^{\prime}}),
\end{equation}
and let $\varsigma_{TM_{1}R_{0}A^{n}}^{m_{2}}$ denote a purification of it, so
that%
\begin{equation}
\omega_{M_{1}M_{2}R_{0}TB^{n}C^{n}}\equiv\frac{1}{M_{C}}\sum_{m_{2}}%
|m_{2}\rangle\langle m_{2}|_{M_{2}}\otimes\mathcal{N}_{A\rightarrow
BC}^{\operatorname{H}\otimes n}(\varsigma_{TM_{1}R_{0}A^{n}}^{m_{2}})
\end{equation}
is an extension of $\omega_{M_{1}M_{2}R_{0}B^{n}C^{n}}$. For $i\in
\{1,\ldots,n\}$, let%
\begin{align}
&  \zeta_{M_{1}M_{2}R_{0}B_{i}C_{i}Y^{i-1}T}^{i}\nonumber\\
&  \equiv(\operatorname{id}_{M_{1}M_{2}R_{0}B_{i}C_{i}T}\otimes\mathcal{M}%
_{B\rightarrow Y}^{\otimes\left(  i-1\right)  }\otimes\operatorname{Tr}%
_{C^{i-1}})(\omega_{M_{1}M_{2}R_{0}B^{i}C^{i}T})\\
&  =\frac{1}{M_{C}}\sum_{m_{2}}|m_{2}\rangle\langle m_{2}|_{M_{2}}%
\otimes(\mathcal{N}_{A_{i}\rightarrow B_{i}C_{i}}^{\operatorname{H}}%
\otimes\left[  \left(  \mathcal{M}_{B\rightarrow Y}^{\otimes\left(
i-1\right)  }\otimes\operatorname{Tr}_{C^{i-1}}\right)  \circ\mathcal{N}%
_{A\rightarrow BC}^{\operatorname{H}\otimes i-1}\right]  )(\varsigma
_{TM_{1}R_{0}A^{i}}^{m_{2}})\\
&  =\frac{1}{M_{C}}\sum_{m_{2},y^{i-1}}|m_{2}\rangle\langle m_{2}|_{M_{2}%
}\otimes\mathcal{N}_{A_{i}\rightarrow B_{i}C_{i}}^{\operatorname{H}}%
(\tau_{TM_{1}R_{0}A_{i}}^{m_{2},y^{i-1}})\otimes p(y^{i-1}|m_{2}%
)|y^{i-1}\rangle\langle y^{i-1}|_{Y^{i-1}},
\end{align}
where%
\begin{equation}
\left[  \left(  \mathcal{M}_{B\rightarrow Y}^{\otimes\left(  i-1\right)
}\otimes\operatorname{Tr}_{C^{i-1}}\right)  \circ\mathcal{N}_{A\rightarrow
BC}^{\operatorname{H}\otimes i-1}\right]  (\varsigma_{TM_{1}R_{0}A^{i}}%
^{m_{2}})=\sum_{y^{i-1}}\tau_{TM_{1}R_{0}A_{i}}^{m_{2},y^{i-1}}\otimes
p(y^{i-1}|m_{2})|y^{i-1}\rangle\langle y^{i-1}|_{Y^{i-1}}%
\end{equation}
Taking a purification of $\tau_{TM_{1}R_{0}A_{i}}^{m_{2},y^{i-1}}$ as
$\varphi_{STM_{1}R_{0}A_{i}}^{m_{2},y^{i-1}}$, we find that an extension of
$\zeta_{M_{1}M_{2}R_{0}B_{i}C_{i}Y^{i-1}T}^{i}$ is%
\begin{multline}
\zeta_{M_{1}M_{2}R_{0}B_{i}C_{i}Y^{i-1}TS}^{i}\equiv\frac{1}{M_{C}}\sum
_{m_{2},y^{i-1}}|m_{2}\rangle\langle m_{2}|_{M_{2}}\\
\otimes\mathcal{N}_{A_{i}\rightarrow B_{i}C_{i}}^{\operatorname{H}}%
(\varphi_{STM_{1}R_{0}A_{i}}^{m_{2},y^{i-1}})\otimes p(y^{i-1}|m_{2}%
)|y^{i-1}\rangle\langle y^{i-1}|_{Y^{i-1}}.
\end{multline}
Let $\zeta_{QM_{1}M_{2}R_{0}BC\overline{Y}TS}$ denote the following state:%
\begin{equation}
\zeta_{QM_{1}M_{2}R_{0}BC\overline{Y}TS}\equiv\frac{1}{n}\sum_{i=1}%
^{n}|i\rangle\langle i|_{Q}\otimes\zeta_{M_{1}M_{2}R_{0}B_{i}C_{i}Y^{i-1}%
TS}^{i},
\end{equation}
where $\overline{Y}$ is large enough to hold the values in each $Y^{i-1}$ (and
zero-padded if need be). 

The following information bound is a consequence of reasoning identical to
that in \eqref{eq:first-C-bound}--\eqref{eq:last-C-bound}:%
\begin{equation}
\frac{1-\varepsilon}{n}\log M_{C}\leq I(W;C)_{\zeta}+\frac{1}{n}%
h_{2}(\varepsilon),
\end{equation}
identifying $W$ as $\overline{Y}M_{2}Q$.

We now handle the other rate bound. Consider that%
\begin{align}
\log M_{B} &  =I(M_{1};M_{1}^{\prime})_{\overline{\Phi}}\\
&  \leq I(M_{1};M_{1}^{\prime})_{\omega}+\varepsilon\log M_{B}+h_{2}%
(\varepsilon),
\end{align}
where the inequality follows from a uniform bound for continuity of entropy
\cite{Z07,A07} (see also \cite{W15book}). Continuing, we find that%
\begin{align}
I(M_{1};M_{1}^{\prime})_{\omega} &  \leq I(M_{1};B^{n}R_{0}M_{2})_{\omega}\\
&  =I(M_{1};B^{n}R_{0}|M_{2})_{\omega}\\
&  =I(M_{1}R_{0};B^{n}|M_{2})_{\omega}+I(M_{1};R_{0}|M_{2})_{\omega}%
-I(B^{n};R_{0}|M_{2})_{\omega}\\
&  \leq I(M_{1}R_{0};B^{n}|M_{2})_{\omega}\\
&  \leq I(M_{1}R_{0}T;B^{n}|M_{2})_{\omega}\\
&  =H(B^{n}|M_{2})_{\omega}-H(B^{n}|M_{1}R_{0}TM_{2})_{\omega}\\
&  =H(B^{n}|M_{2})_{\omega}+H(B^{n}|C^{n}M_{2})_{\omega}%
\label{eq:EAC-1st-block}%
\end{align}
The first inequality follows from quantum data processing. The first equality
follows from the chain rule for mutual information and the fact that
$I(M_{1};M_{2})_{\omega}=0$. The second equality is an identity for
conditional mutual information. The second inequality follows because
$I(M_{1};R_{0}|M_{2})_{\omega}=0$\ (i.e., the reduced state on these systems
is a product state) and $I(B^{n};R_{0}|M_{2})_{\omega}\geq0$. The third
inequality follows from quantum data processing for the conditional mutual
information. The third equality is an expansion of the conditional mutual
information. The third equality follows because the state of systems
$M_{1}R_{0}TB^{n}C^{n}$ is pure when conditioned on classical system $M_{2}$.
Continuing,%
\begin{align}
\eqref{eq:EAC-1st-block} &  =\sum_{i=1}^{n}H(B_{i}|B^{i-1}M_{2})_{\omega
}+H(B_{i}|B^{i-1}C^{n}M_{2})_{\omega}\\
&  \leq\sum_{i=1}^{n}H(B_{i}|Y^{i-1}M_{2})_{\zeta^{i}}+H(B_{i}|Y^{i-1}%
C_{i}M_{2})_{\zeta^{i}}\\
&  =\sum_{i=1}^{n}H(B_{i}|Y^{i-1}M_{2})_{\zeta^{i}}-H(B_{i}|R_{0}%
STM_{1}Y^{i-1}M_{2})_{\zeta^{i}}\\
&  =\sum_{i=1}^{n}I(R_{0}STM_{1};B_{i}|Y^{i-1}M_{2})_{\zeta^{i}}\\
&  =I(R_{0}STM_{1};B|\overline{Y}M_{2}Q)_{\zeta}.
\end{align}
The first equality follows from the chain rule for conditional entropy. The
first inequality applies the data processing inequality for conditional
entropy:\ the $Y^{i-1}$ systems result from measurements of the $B^{i-1}$
systems and then we discard the $C^{i-1}$ systems. The second equality follows
because the state of the $R_{0}STM_{1}B_{i}C_{i}$ systems is pure when
conditioned on systems $Y^{i-1}$ and $M_{2}$. The second equality follows from
the definition of conditional mutual information. The third equality follows
by introducing the $Q$ system and evaluating the conditional mutual
information of the state $\zeta_{QM_{1}M_{2}R_{0}BC\overline{Y}TS}$.

Putting everything together, we find that the following inequalities hold%
\begin{align}
\frac{1-\varepsilon}{n}\log M_{B}  &  \leq I(R_{0}STM_{1};B|\overline{Y}%
M_{2}Q)_{\zeta}+\frac{1}{n}h_{2}(\varepsilon),\\
\frac{1-\varepsilon}{n}\log M_{C}  &  \leq I(\overline{Y}M_{2}Q;C)_{\zeta
}+\frac{1}{n}h_{2}(\varepsilon).
\end{align}
Now identifying the systems $R_{0}STM_{1}$ with system $R$ in
\eqref{eq:EAC-code-state}, systems $\overline{Y}M_{2}Q$ with system $W$ in
\eqref{eq:EAC-code-state}, and the state $\varphi_{STM_{1}R_{0}A_{i}}%
^{m_{2},y^{i-1}}$ with $\varphi_{RA}^{w}$ in \eqref{eq:EAC-code-state}, we can
rewrite the above inequalities as follows:%
\begin{align}
\frac{1-\varepsilon}{n}\log M_{B}  &  \leq I(R;B|W)_{\zeta}+\frac{1}{n}%
h_{2}(\varepsilon),\\
\frac{1-\varepsilon}{n}\log M_{C}  &  \leq I(W;C)_{\zeta}+\frac{1}{n}%
h_{2}(\varepsilon).
\end{align}
Now that we have established that these inequalities hold for an arbitrary
$(n,M_{B},M_{C},\varepsilon)$ code, considering a sequence $\{(n,M_{B}%
,M_{C},\varepsilon_{n})\}_{n}$ of them with $\varepsilon_{n}\rightarrow0$ as
$n\rightarrow\infty$, we find that the rate region is characterized by \eqref{eq:EAC-bnd-1}--\eqref{eq:EAC-bnd-2}.
\end{proof}

\begin{remark}
In all of the capacity theorems established here (Theorems~\ref{thm:CC-HBC},
\ref{thm:CQ-HBC}, and \ref{thm:EAC-HBC}), the rate $R_{C}$ can be replaced by
the sum of a rate $R_{BC}$ and $R_{C}$, where $R_{BC}$ is the rate of a common
message intended for both Bob and Charlie, while $R_{C}$ is the rate of a
message intended for Charlie. This is because all of our converses go through
with this modification, and at the same time, all of the achievability parts
make use of a superposition coding technique, in which Bob first decodes the
message intended for Charlie before decoding the message intended for him.
\end{remark}

\section{Conclusion}

\label{sec:conclusion}This paper solves the classical capacity,
classical--quantum capacity, and partially entanglement-assisted classical
capacity of Hadamard broadcast channels. As such, these channels might
naturally be viewed as a quantum extension of the notion of a degraded
broadcast channel. Essential in all of our analyses is the structure of a
Hadamard broadcast channel in which the $C$ system can be simulated in two
steps:\ first a measurement channel taking the $B$ system to a classical $Y$
system and then a preparation channel taking the classical $Y$ system to the
$C$ system. This structure allows for a classical auxiliary variable to
include the classical $Y$ system in each of the problems we considered, as is
common in network classical information theory \cite{el2010lecture}.

Much remains to be understood about including fully quantum systems in
auxiliary variables, but there has been some progress on this front
\cite{BG14}\ and various information-theoretic tasks have been characterized
using auxiliary quantum variables \cite{SSW08,S08,BO12,TGW14IEEE}. However, in
many of these cases, it is not known whether a bound can be placed on the
dimension of an auxiliary quantum system and so quantities involving quantum
auxiliary variables are not known to be tractable. At the least, the structure
of a Hadamard broadcast channel allows for circumventing this problem and
yields a complete characterization of some of its capacities.

\bigskip\textbf{Acknowledgments}. We are grateful to Haoyu Qi for discussions
related to the topic of this paper.  QLW is supported by the NFSC (Grants
No.~61272057, No.~61309029 and No.~61572081) and funded by the China
Scholarship Council (Grant No.~201506470043). SD acknowledges support from the
LSU Graduate School Economic Development Assistantship. MMW acknowledges
support from the NSF under Award No.~CCF-1350397.


\begin{thebibliography}{99}                                                                                               %
\bibitem {C72}Thomas~M. Cover. \newblock Broadcast channels.
\newblock {\em IEEE Transactions on Information Theory}, 18(1):2--14, January  1972.

\bibitem {M79}Katalin Marton. \newblock A coding theorem for the discrete
memoryless broadcast channel. \newblock {\em IEEE Trans. Inf. Theory},
25(3):306--311, 1979.

\bibitem {cover1998comments}Thomas~M. Cover. \newblock Comments on broadcast
channels. \newblock {\em IEEE Transactions on Information Theory},
44(6):2524--2530,  1998.

\bibitem {el2010lecture}Abbas~El Gamal and Young-Han Kim.
\newblock {\em Network Information Theory}. \newblock Cambridge University
Press, January 2012. \newblock arXiv:1001.3404.

\bibitem {YHD2006}Jon Yard, Patrick Hayden, and Igor Devetak.
\newblock Quantum broadcast channels.
\newblock {\em IEEE Transactions on Information Theory}, 57(10):7147--7162,
October 2011. \newblock arXiv:quant-ph/0603098.

\bibitem {nat1982}William~K. Wootters and Wojciech~H. Zurek. \newblock A
single quantum cannot be cloned. \newblock {\em Nature}, 299:802--803, 1982.

\bibitem {D82}D.~Dieks. \newblock Communication by {EPR} devices.
\newblock {\em Physics Letters A}, 92:271, 1982.

\bibitem {BCFJS96}Howard Barnum, Carlton~M. Caves, Christopher~A. Fuchs,
Richard Jozsa, and  Benjamin Schumacher. \newblock Noncommuting mixed states
cannot be broadcast. \newblock {\em Physical Review Letters},
76(15):2818--2821, April 1996. \newblock arXiv:quant-ph/9511010.

\bibitem {BBLW07}Howard Barnum, Jonathan Barrett, Matthew Leifer, and
Alexander Wilce. \newblock Generalized no-broadcasting theorem.
\newblock {\em Physical Review Letters}, 99(24):240501, December 2007. \newblock arXiv:0707.0620.

\bibitem {PHH08}Marco Piani, Pawel Horodecki, and Ryszard Horodecki.
\newblock No-local-broadcasting theorem for multipartite quantum correlations.
\newblock {\em Physical Review Letters}, 100(9):090502, March 2008. \newblock arXiv:0707.0848.

\bibitem {Piani16}Marco Piani. \newblock Local broadcasting of quantum
correlations. \newblock August 2016. \newblock arXiv:1608.02650.

\bibitem {SW11}Ivan Savov and Mark~M. Wilde. \newblock Classical codes for
quantum broadcast channels.
\newblock {\em IEEE Transactions on Information Theory}, 61(12):1--12,
December  2015. \newblock arXiv:1111.3645.

\bibitem {DHL10}Frederic Dupuis, Patrick Hayden, and Ke~Li. \newblock A father
protocol for quantum broadcast channels.
\newblock {\em IEEE Transactions on Information Theory}, 56(6):2946--2956,
June  2010. \newblock arXiv:quant-ph/0612155.

\bibitem {RSW14}Jaikumar Radhakrishnan, Pranab Sen, and Naqueeb Warsi.
\newblock One-shot {Marton} inner bound for classical-quantum broadcast
channel. \newblock {\em IEEE Transactions on Information Theory},
62(5):2836--2848, May  2016. \newblock arXiv:1410.3248.

\bibitem {HM15}Christoph Hirche and Ciara Morgan. \newblock An improved rate
region for the classical-quantum broadcast channel. \newblock In
\emph{Proceedings of the 2015 IEEE International Symposium on  Information
Theory}, pages 2782--2786, Hong Kong, 2015. \newblock arXiv:1501.07417.

\bibitem {STW16}Kaushik~P. Seshadreesan, Masahiro Takeoka, and Mark~M. Wilde.
\newblock Bounds on entanglement distillation and secret key agreement for
quantum broadcast channels.
\newblock {\em IEEE Transactions on Information Theory}, 62(5):2849--2866,
May  2016. \newblock arXiv:1503.08139.

\bibitem {TSW16}Masahiro Takeoka, Kaushik~P. Seshadreesan, and Mark~M. Wilde.
\newblock Unconstrained distillation capacities of a pure-loss bosonic
broadcast channel. \newblock In \emph{Proceedings of the 2016 IEEE
International Symposium on  Information Theory}, pages 2484--2488, Barcelona,
Spain, July 2016. \newblock arXiv:1601.05563.

\bibitem {HSR03}Michal Horodecki, Peter~W. Shor, and Mary~Beth Ruskai.
\newblock Entanglement breaking channels.
\newblock {\em Reviews in Mathematical Physics}, 15(6):629--641, 2003. \newblock arXiv:quant-ph/0302031.

\bibitem {KMNR07}Christopher King, Keiji Matsumoto, Michael Nathanson, and
Mary~Beth Ruskai. \newblock Properties of conjugate channels with applications
to additivity and  multiplicativity.
\newblock {\em Markov Processes and Related Fields}, 13(2):391--423, 2007.
\newblock J. T. Lewis memorial issue. arXiv:quant-ph/0509126.

\bibitem {BHTW10}Kamil Br{\'a}dler, Patrick Hayden, Dave Touchette, and
Mark~M. Wilde. \newblock Trade-off capacities of the quantum {Hadamard}
channels. \newblock {\em Physical Review A}, 81(6):062312, June 2010. \newblock arXiv:1001.1732.

\bibitem {WH10b}Mark~M. Wilde and Min-Hsiu Hsieh. \newblock The quantum
dynamic capacity formula of a quantum channel.
\newblock {\em Quantum Information Processing}, 11(6):1431--1463, December
2012. \newblock arXiv:1004.0458.

\bibitem {W15book}Mark~M. Wilde.
\newblock {\em From Classical to Quantum Shannon Theory}. \newblock March
2016. \newblock arXiv:1106.1445v7.

\bibitem {Holevo2008}Alexander~S. Holevo. \newblock Entanglement-breaking
channels in infinite dimensions.
\newblock {\em Problems of Information Transmission}, 44(3):171--184, 2008. \newblock arXiv:0802.0235.

\bibitem {giovannetti2014ultimate}Vittorio Giovannetti, Ra{\'u}l Garc{\'\i
}a-Patr{\'o}n, Nicholas~J. Cerf, and  Alexander~S. Holevo. \newblock Ultimate
classical communication rates of quantum optical channels.
\newblock {\em Nature Photonics}, 8(10):796--800, September 2014. \newblock arXiv:1312.6225.

\bibitem {SIGA05}Valerio Scarani, Sofyan Iblisdir, Nicolas Gisin, and Antonio
Ac\'{\i}n. \newblock Quantum cloning.
\newblock {\em Reviews of Modern Physics}, 77(4):1225--1256, November 2005.

\bibitem {B09}Kamil Br{\'a}dler. \newblock An infinite sequence of additive
channels: the classical capacity of  cloning channels.
\newblock {\em IEEE Transactions on Information Theory}, 57(8):5497--5503,
August 2011. \newblock arXiv:0903.1638.

\bibitem {Fan2014241}Heng Fan, Yi-Nan Wang, Li~Jing, Jie-Dong Yue, Han-Duo
Shi, Yong-Liang Zhang,  and Liang-Zhu Mu. \newblock Quantum cloning machines
and the applications. \newblock {\em Physics Reports}, 544(3):241--322,
November 2014. \newblock arXiv:1301.2956.

\bibitem {cmp2005dev}Igor Devetak and Peter~W. Shor. \newblock The capacity of
a quantum channel for simultaneous transmission of  classical and quantum
information. \newblock {\em Communications in Mathematical Physics},
256(2):287--303, June  2005. \newblock arXiv:quant-ph/0311131.

\bibitem {Z07}Zhengmin Zhang. \newblock Estimating mutual information via
{Kolmogorov} distance.
\newblock {\em IEEE Transactions on Information Theory}, 53(9):3280--3282,
September 2007.

\bibitem {A07}Koenraad M.~R. Audenaert. \newblock A sharp continuity estimate
for the von {Neumann} entropy.
\newblock {\em Journal of Physics A: Mathematical and Theoretical},
40(28):8127, July 2007. \newblock arXiv:quant-ph/0610146.

\bibitem {Winter15}Andreas Winter. \newblock Tight uniform continuity bounds
for quantum entropies: conditional  entropy, relative entropy distance and
energy constraints. \newblock {\em Communications in Mathematical Physics},
347(1):291--313,  October 2016. \newblock arXiv:1507.07775.

\bibitem {Shor04}Peter~W. Shor.
\newblock {\em Quantum Information, Statistics, Probability (Dedicated to A. S.
Holevo on the occasion of his 60th Birthday): The classical capacity
achievable by a quantum channel assisted by limited entanglement}.
\newblock Rinton Press, Inc., 2004. \newblock arXiv:quant-ph/0402129.

\bibitem {itit2008hsieh}Min-Hsiu Hsieh, Igor Devetak, and Andreas Winter.
\newblock Entanglement-assisted capacity of quantum multiple-access channels.
\newblock {\em IEEE Transactions on Information Theory}, 54(7):3078--3090,
July  2008. \newblock arXiv:quant-ph/0511228.

\bibitem {BG14}Salman Beigi and Amin Gohari. \newblock On dimension bounds for
auxiliary quantum systems.
\newblock {\em IEEE Transactions on Information Theory}, 60(1):368--387,
January 2014. \newblock arXiv:1207.3911.

\bibitem {SSW08}Graeme Smith, John~A. Smolin, and Andreas Winter.
\newblock The quantum capacity with symmetric side channels.
\newblock {\em IEEE Transactions on Information Theory}, 54(9):4208--4217,
September 2008. \newblock arXiv:quant-ph/0607039.

\bibitem {S08}Graeme Smith. \newblock Private classical capacity with a
symmetric side channel and its  application to quantum cryptography.
\newblock {\em Physical Review A}, 78(2):022306, August 2008. \newblock arXiv:0705.3838.

\bibitem {BO12}Fernando G. S.~L. Brand\~ao and Jonathan Oppenheim.
\newblock Quantum one-time pad in the presence of an eavesdropper.
\newblock {\em Physical Review Letters}, 108(4):040504, January 2012. \newblock arXiv:1004.3328.

\bibitem {TGW14IEEE}Masahiro Takeoka, Saikat Guha, and Mark~M. Wilde.
\newblock The squashed entanglement of a quantum channel.
\newblock {\em IEEE Transactions on Information Theory}, 60(8):4987--4998,
August 2014. \newblock arXiv:1310.0129.
\end{thebibliography}

\end{document}